\newtheorem{theorem}{Theorem}
\newtheorem{corollary}{Corollary}
\newtheorem{lemma}{Lemma}
\newtheorem{remark}{Remark}
\newtheorem{example}{Example}
\providecommand{\E}{{\rm E}} \providecommand{\Var}{{\rm Var}}
\providecommand{\Cov}{{\rm Cov}}
\providecommand{\xv}{\mathbf{x}} 
 \providecommand{\vv}{\mathbf{v}}
 \providecommand{\Uv}{\mathbf{U}}
\providecommand{\Vv}{\mathbf{V}} \providecommand{\Xv}{\mathbf{X}}
\providecommand{\hv}{\mathbf{h}} \providecommand{\gv}{\mathbf{g}}
 \providecommand{\bv}{\mathbf{b}}
\providecommand{\ev}{\mathbf{e}} \providecommand{\cv}{\mathbf{c}}
 \providecommand{\Yc}{{\mathcal Y}}
\providecommand{\Wc}{{\mathcal W}} 
 \providecommand{\Cc}{{\mathcal C}}
\providecommand{\Pc}{{\mathcal P}} \providecommand{\Rc}{{\mathcal R}}
 \providecommand{\Sc}{{\mathcal S}}
\providecommand{\Lc}{{\mathcal L}}
\providecommand{\Yt}{\tilde{Y}} \providecommand{\yt}{\tilde{y}}
\providecommand{\Zt}{\tilde{Z}} \providecommand{\zt}{\tilde{z}}
\providecommand{\Pr}{{\rm Pr}}
\providecommand{\tr}{{\rm tr}}
\title{\LARGE{\rm Secrecy Capacity Region of a Multi-Antenna Gaussian\\[2mm] Broadcast Channel with Confidential Messages}}
\author{Ruoheng Liu and H. Vincent Poor%
\thanks{This research was supported by the National Science Foundation under Grants ANI-03-38807 and CNS-06-25637.
The material in this paper was presented in part at the First International Workshop on Information Theory
for Sensor Networks, Santa Fe, NM, June 18 - 20, 2007}%
\thanks{Ruoheng Liu and H. Vincent Poor are with Department of Electrical Engineering, Princeton University,
Princeton, NJ 08544, USA, email: {\{rliu,poor\}@princeton.edu}.}}
\begin{document}

\maketitle
\thispagestyle{empty}

\doublespace

\begin{abstract}

In wireless data networks, communication is particularly susceptible to
eavesdropping due to its broadcast nature. Security and privacy systems have
become critical for wireless providers and enterprise networks. This paper
considers the problem of secret communication over the Gaussian broadcast
channel, where a multi-antenna transmitter sends independent confidential
messages to two users with \emph{information-theoretic secrecy}. That is, each
user would like to obtain its own confidential message in a reliable and safe
manner. This communication model is referred to as the multi-antenna Gaussian
broadcast channel with confidential messages (MGBC-CM). Under this
communication scenario, a secret dirty-paper coding scheme and the
corresponding achievable secrecy rate region are first developed based on
Gaussian codebooks. Next, a computable Sato-type outer bound on the secrecy
capacity region is provided for the MGBC-CM. Furthermore, the Sato-type outer
bound prove to be consistent with the boundary of the secret dirty-paper coding
achievable rate region, and hence, the secrecy capacity region of the MGBC-CM
is established. Finally, two numerical examples demonstrate that both users can
achieve positive rates simultaneously under the information-theoretic secrecy
requirement.
\end{abstract}

\begin{keywords}
secret communication, broadcast channels, multiple antennas,
information-theoretic secrecy
\end{keywords}

\section{Introduction}

The need for efficient, reliable, and secret data communication over wireless
networks has been rising rapidly for decades. Due to its broadcast nature,
wireless communication is particularly susceptible to eavesdropping. The
inherent problematic nature of wireless networks exposes not only the risks and
vulnerabilities that a malicious user can exploit and severely compromise the
network, but also multiplies information confidentiality concerns with respect
to in-network terminals. Hence, security and privacy systems have become
critical for wireless providers and enterprise networks.

In this work, we consider multiple antenna secret broadcast in wireless
networks. This research is inspired by the seminal paper \cite{Wyner:BSTJ:75},
in which Wyner introduced the so-called {\it wiretap channel} and proposed an
information theoretic approach to secret communication schemes. Under the
assumption that the channel to the eavesdropper is a degraded version of that
to the desired receiver, Wyner characterized the capacity-secrecy tradeoff for
the discrete memoryless wiretap channel and showed that secret communication is
possible without sharing a secret key. Later, the result was extended by
Csisz{\'{a}r and K{\"{o}rner who determined the secrecy capacity for the
non-degraded {\it broadcast channel} (BC) with a single confidential message
intended for one of the users \cite{Csiszar:IT:78}.

In more general wireless network scenarios, secret communication may involve
multiple users and multiple antennas. Motivated by wireless communication,
where transmitted signals are broadcast and can be received by all users within
the communication range, a significant research effort has been invested in the
study of the information-theoretic limits of secret communication in different
wireless network environments including multi-user communication with
confidential messages
\cite{Oohama:ITW:01,Csiszar:IT:04,Tekin:ISIT:06,Tekin:ITA:07,Liang06it,Liu:ISIT:06,Liu:Allerton:06,Lai:IT:06,Elza:CISS:07},
secret wireless communication on fading channels
\cite{Barros:ISIT:06,Li:Allerton:06,Liang06novit,Gopala:ISIT:07}, and the
Gaussian multiple-input single-output (MISO) and multiple-input multiple-output
(MIMO) wiretap channels
\cite{Li:CISS:07,Liu:WITS:07,Wornell:ISIT:07,Wornell-IT-2007,Ulukus:ISIT:07,shafiee-IT-2007}.

These issues motivate us to study the multi-antenna Gaussian BC with
confidential messages (MGBC-CM), in which independent confidential messages
from a multi-antenna transmitter are to be communicated to two users. The
corresponding broadcast communication model is shown in Fig.~\ref{fig:gbc}.
Each user would like to obtain its own message reliably and confidentially.
\begin{figure}[t]
 \centerline{\includegraphics[width=0.8\linewidth,draft=false]{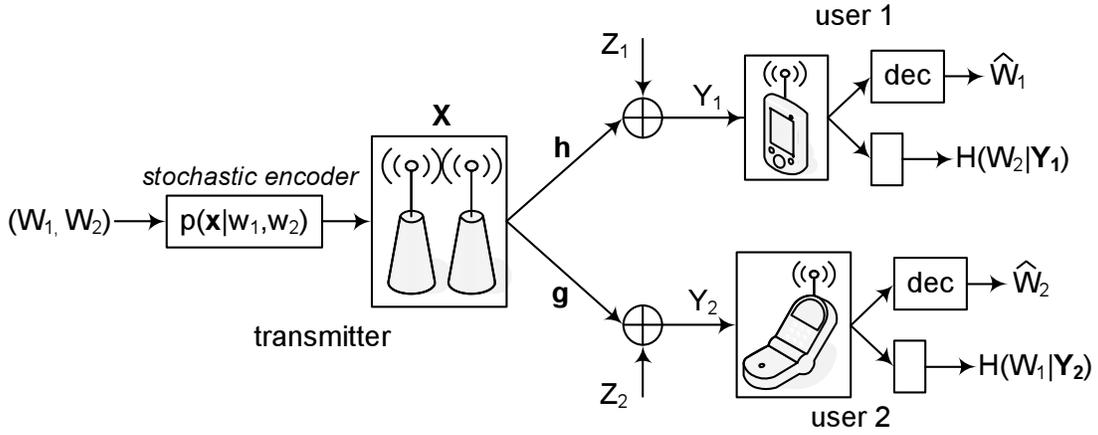}} \caption{
Multiple-antenna Gaussian broadcast channel with confidential message}
  \label{fig:gbc}
\end{figure}

To give insight into this problem, we first consider a single-antenna Gaussian
BC. Note that this channel is degraded~\cite{Cover}, which means that if a
message can be successfully decoded by the inferior user, then the superior
user is also ensured of decoding it. Hence, the secrecy rate of the inferior
user is zero and this problem is reduced to the scalar Gaussian wiretap channel
problem~\cite{Cheong:IT:78} whose secrecy capacity is now the maximum rate
achievable by the superior user. This analysis gives rise to the question: can
the transmitter, in fact, communicate with both users confidentially at nonzero
rate under some other conditions? Roughly speaking, the answer is in the
affirmative. In particular, the transmitter can communicate when equipped with
sufficiently separated multiple antennas.

We here have two goals motivated directly by questions arising in practice. The
first is to determine the condition under which both users can obtain their own
confidential messages 
in a reliable and safe manner. This is equivalent to evaluating the secrecy
capacity region for the MGBC-CM. The second is to show \emph{how} the
transmitter should broadcast confidentially, which is equivalent to designing
an achievable secret coding scheme. To this end, we first describe a secret
{\it dirty-paper coding} (DPC) scheme and derive the corresponding achievable
secrecy rate region based on Gaussian codebooks. The secret DPC is based on
{\it double-binning} \cite{Liu07it} which enables both joint encoding and
preserving confidentiality. Next, a computable Sato-type outer bound on the
secrecy capacity region is developed for the MGBC-CM. Furthermore, the
Sato-type outer bound prove to be consistent with the boundary of the secret
dirty-paper coding achievable rate region, and hence, the secrecy capacity
region of the MGBC-CM is established. Finally, two numerical examples
demonstrate that both users can achieve positive rates simultaneously under the
information-theoretic secrecy requirement.

The remainder of this paper is organized as follows. The system model and
definitions are introduced in Section~\ref{sec:model}. The main results on the
secrecy capacity region of the MGBC-CM is state in Section~\ref{sec:main}. The
achievability proof associated with the secret DPC scheme is established in
Section~\ref{sec:in}. The converse proof is derived in Section~\ref{sec:out}
based on the Sato-type outer bound. Finally, Section~\ref{sec:ex} shows
numerical examples and Section~\ref{sec:con} points our our conclusions.

\section{System Model and Definitions} \label{sec:model}

\subsection{Channel Model}

We consider the communication of confidential messages to two users over a
Gaussian BC via $t\ge 2$ transmit-antennas. Each user is equipped with a single
receive-antenna. As shown in Fig.~\ref{fig:gbc}, the transmitter sends
independent confidential messages $W_1$ and $W_2$ in $n$ channel uses with
$nR_1$ and $nR_2$ bits, respectively. The message $W_1$ is destined for user 1
and eavesdropped by user 2, whereas the message $W_2$ is destined for user 2
and eavesdropped by user 1. This communication scenario is referred to as the
{\it multi-antenna Gaussian BC with confidential messages}. The Gaussian BC is
an additive noise channel and the received symbols at user 1 and user 2 are
represented using the following expression:
\begin{align}
y_{1,i}&=\hv^{H} \xv_i+z_{1,i} \notag\\
y_{2,i}&=\gv^{H} \xv_i+z_{2,i}, \qquad  i=1,\dots,n \label{eq:miso}
\end{align}
where $\xv_i \in \mathbb{C}^{t}$ is a complex input vector at time $i$,
$\{z_{1,i}\}$ and $\{z_{2,i}\}$ correspond to two independent, zero-mean,
unit-variance, complex Gaussian noise sequences, and $\hv, \gv\in
\mathbb{C}^{t}$ are fixed, complex channel attenuation vectors imposed on
user~1 and user~2, respectively. The channel input is constrained by
$\tr(K_{\Xv}) \le P$, where $P$ is the average total power limitation at the
transmitter. We also assume that both the transmitter and users are aware of
the attenuation vectors.

\subsection{Important Channel Parameters for the MGBC-CM}

For the MGBC-CM, we are interested in the following important parameters, which
are related to the generalized eigenvalue problem (see Appendix~\ref{app:geig}
for the details).

Let $\lambda_1$ and $\ev_1$ denote the largest generalized eigenvalue and the
corresponding normalized eigenvector of the pencil $(I+P\hv \hv^{H}, I+P\gv
\gv^{H})$ so that $\ev_1^{H}\ev_1=1$ and
\begin{align}
(I+P\hv \hv^{H})\ev_1= \lambda_1(I+P\gv \gv^{H})\ev_1. \label{eq:eig-def1}
\end{align}
Similarly, we define $\lambda_2$ and $\ev_2$ as the largest generalized
eigenvalue and the corresponding normalized eigenvector of the pencil $(I+P\gv
\gv^{H}, I+P\hv \hv^{H})$ so that $\ev_2^{H}\ev_2=1$ and
\begin{align}
(I+P\gv \gv^{H})\ev_2= \lambda_2(I+P\hv \hv^{H})\ev_2. \label{eq:eig-def2}
\end{align}
An useful property of $\lambda_1$ and $\lambda_2$ is described in the following
lemma.
\begin{lemma} \label{lem:geig}
For any channel attenuation vector pair $\hv$ and $\gv$, the largest
generalized eigenvalues of the pencil $(I+P\hv \hv^{H}, I+P\gv \gv^{H})$ and
the pencil $(I+P\gv \gv^{H}, I+P\hv \hv^{H})$ satisfy
\begin{align}
\lambda_1 \ge 1 \quad \text{and} \quad \lambda_2 \ge 1.
\end{align}
Moreover, if $\hv$ and $\gv$ are linearly independent, then both $ \lambda_1$
and $\lambda_2$ are strictly greater than $1$.
\end{lemma}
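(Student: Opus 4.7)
The plan is to use the Rayleigh quotient characterization of the largest generalized eigenvalue. Since $I+P\gv\gv^H$ is Hermitian positive definite (its eigenvalues are $1$ with multiplicity $t-1$ and $1+P\|\gv\|^2$), the largest generalized eigenvalue of the pencil $(I+P\hv\hv^H,\,I+P\gv\gv^H)$ admits the variational formula
\[
\lambda_1 \;=\; \max_{v\ne 0}\; \frac{v^H(I+P\hv\hv^H)v}{v^H(I+P\gv\gv^H)v}.
\]
Thus to prove $\lambda_1\ge 1$ it is enough to exhibit a single nonzero test vector $v$ for which the ratio is at least one, and to upgrade the inequality to strict it is enough to make the ratio strictly larger than one under the linear independence hypothesis.

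For the non-strict part I would pick any nonzero $v$ orthogonal to $\gv$. Such a $v$ exists because $t\ge 2$ so the orthogonal complement of $\operatorname{span}\{\gv\}$ has dimension at least $t-1\ge 1$ (and if $\gv=0$ any nonzero $v$ works). Then $v^H(I+P\gv\gv^H)v=\|v\|^2$ while $v^H(I+P\hv\hv^H)v=\|v\|^2+P|v^H\hv|^2\ge\|v\|^2$, so the Rayleigh quotient is $\ge 1$, giving $\lambda_1\ge 1$.

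For the strict part, I would decompose $\hv=\alpha\gv+\hv_\perp$ with $\hv_\perp\perp\gv$. Linear independence of $\hv$ and $\gv$ forces $\hv_\perp\ne 0$. Choosing $v=\hv_\perp$ yields $v^H\gv=0$ and $v^H\hv=\|\hv_\perp\|^2>0$, so
\[
\frac{v^H(I+P\hv\hv^H)v}{v^H(I+P\gv\gv^H)v} \;=\; \frac{\|\hv_\perp\|^2+P\|\hv_\perp\|^4}{\|\hv_\perp\|^2} \;=\; 1+P\|\hv_\perp\|^2 \;>\;1,
\]
so $\lambda_1>1$. The bound $\lambda_2\ge 1$ (resp.\ $\lambda_2>1$) follows by the identical argument with the roles of $\hv$ and $\gv$ exchanged.

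The proof is essentially a one-line Rayleigh-quotient calculation; there is no serious obstacle, only the minor bookkeeping of confirming $v\perp\gv$ with $v\ne 0$ exists when $t\ge 2$ and of extracting the component of $\hv$ transverse to $\gv$ in the strict case.
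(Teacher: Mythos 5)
Your proposal is correct and follows essentially the same route as the paper's own proof in Appendix~A: both invoke the Rayleigh-quotient characterization of $\lambda_1$ and evaluate it at a nonzero test vector orthogonal to $\gv$, obtaining $1+P|v^H\hv|^2\ge 1$, with strictness under linear independence. Your explicit decomposition $\hv=\alpha\gv+\hv_\perp$ merely makes concrete the test vector whose existence the paper asserts.
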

\begin{proof}
We provide the proof in Appendix~\ref{app:geig}.
\end{proof}

\subsection{Definitions}

We now define the secret codebook, the probability of error, the secrecy level,
and the secrecy capacity region for the MGBC-CM as follows.

An $(2^{nR_1},2^{nR_2},n)$ {\it secret codebook} for the MGBC-CM consists of
the following:
\begin{enumerate}
  \item Two message sets $\Wc_1=\{1,\ldots,2^{nR_1}\}$ and
  $\Wc_2=\{1,\ldots,2^{nR_2}\}$.

  \item An stochastic encoding function is specified by a
  matrix of conditional probability density $p(\xv^{n}|w_1,w_2)$,
  where $\xv^{n}=[\xv_1,\ldots,\xv_n]\in \mathbb{C}^{t\times n}$, $w_k\in\Wc_k$, and
\begin{align*}
\int_{\xv^{n}}p(\xv^{n}|w_1,w_2)=1.
\end{align*}

  \item Decoding functions $\phi_1$ and $\phi_2$. The decoding function
  at user $k$ is a deterministic mapping $$\phi_k  : \Yc_k^{n}\rightarrow
  \Wc_k.$$
\end{enumerate}

\begin{remark}
To increase the randomness of transmitted messages, we consider a {\it
stochastic} encoder at the transmitter. In other words, $p(\xv^{n}|w_1,w_2)$ is
the conditional probability density that the messages $(w_1,w_2)$ are jointly
encoded as the channel input sequence $\xv^{n}$.
\end{remark}

At the receiver ends, the error performance and the secrecy level are evaluated
by the following performance measures.
\begin{enumerate}
  \item The reliability is measured by the maximum error probability
  $$P_e^{(n)}\triangleq \max \bigl\{P_{e,1}^{(n)}, P_{e,2}^{(n)}\bigr\}$$
  where $P_{e,k}^{(n)}$ is the error probability for user $k$ given by
\begin{align}
P_{e,k}^{(n)}&=2^{-n(R_1+R_2)}\sum_{w_1\in\Wc_1}\sum_{w_2\in\Wc_2}\Pr\bigl[\phi_k(Y_k^{n})\neq
w_k \big|(w_1,w_2)  \text{ sent} \bigr].
\end{align}

  \item The secrecy levels with respect to confidential messages $W_1$ and $W_2$
are measured, respectively, at user 2 and user 1 with respect to the {\it
equivocation rates}
\begin{equation} \label{eq:nq}
\frac{1}{n}H(W_2|Y_1^{n})\quad \text{and} \quad \frac{1}{n}H(W_1|Y_2^{n}).
\end{equation}
\end{enumerate}

A rate pair $(R_1,  R_2)$ is said to be achievable for the MGBC-CM if, for any
$\epsilon>0$, there exists an $(2^{nR_1}, 2^{nR_2}, n)$ code that satisfies
$P_e^{(n)}\le \epsilon$, and the information-theoretic secrecy
requirement\footnote{This definition corresponds to the so-called {\it weak
secrecy-key rate} \cite{Maurer:EUROCRYPT:00}. A stronger measurement of the
secrecy level has been defined by Maurer and Wolf in terms of absolute
equivocation \cite{Maurer:EUROCRYPT:00}, where the authors have shown that the
former definition could replaced by the latter without any rate penalty in a
wiretap channel.}
\begin{align}
&            & nR_1-H(W_1|Y_2^{n})&\le n\epsilon &\notag\\
&\text{and}  & nR_2-H(W_2|Y_1^{n})&\le n\epsilon. & \label{eq:equiv}
\end{align}

The {\it secrecy capacity region} $\Cc^{\rm MG}_{s}$ of the MGBC-CM is the
closure of the set of all achievable rate pairs $(R_1, R_2)$.

\section{Main Result: Secrecy Capacity Region for the MGBC-CM} \label{sec:main}

The two-user Gaussian BC with multiple transmit-antennas is non-degraded. For
this channel, we have the following closed-from result on the secrecy capacity
region under the information-theoretic secrecy requirement.

\begin{theorem} \label{thm:GBC}
We consider an MGBC-CM modeled in (\ref{eq:miso}). Let
\begin{align}
\gamma_1(\alpha)=\frac{1+\alpha P|\hv^{H}\ev_1|^2}{1+ \alpha P
|\gv^{H}\ev_1|^2}, \label{eq:def-gm-1}
\end{align}
$\gamma_2(\alpha)$ be the largest generalized eigenvalue of the pencil
\begin{align}
\left(I+\frac{(1-\alpha) P}{1+\alpha P |\gv^{H}\ev_1|^2}\gv\gv^{H}, \;
I+\frac{(1-\alpha) P}{1+ \alpha P |\hv^{H}\ev_1|^2}\hv \hv^{H}\right),
\label{eq:pencil2}
\end{align}
and ${\Rc}^{\rm MG}(\alpha)$ denote the union of all $(R_1,R_2)$ satisfying
\begin{align}
&          &  0&\le R_1\le \log_2 \gamma_1(\alpha)  &\notag\\
&\text{and} & 0&\le R_2\le \log_2 \gamma_2(\alpha).  & \label{eq:sc-rate}
\end{align}
The secrecy capacity region of the MGBC-CM is
\begin{align}
\Cc^{\rm MG}_{s}= {\rm co} \left\{\bigcup_{0\le \alpha \le 1} \Rc^{\rm
MG}(\alpha)\right\}, \label{eq:sc-mg}
\end{align}
where ${\rm co}\{\Sc\}$ denotes the convex hull of the set $\Sc$.
\end{theorem}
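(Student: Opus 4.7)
The plan is to split the theorem into an achievability half and a converse half, following the scheme previewed by the authors, and then glue them together on the boundary by exploiting the extremal structure of the generalized eigenvalue problem from Lemma~\ref{lem:geig}.

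For achievability, I would instantiate the secret dirty-paper coding (double-binning) scheme with Gaussian codebooks and compute the resulting equivocation-constrained rates explicitly. Specifically, I would write $\Xv = \Uv_1 + \Uv_2$ with $\Uv_1 = \sqrt{\alpha P}\,\ev_1 V_1$ carrying $W_1$ along the generalized eigenvector $\ev_1$ of the pencil $(I+P\hv\hv^H, I+P\gv\gv^H)$, and let $\Uv_2$ carry $W_2$ via DPC against the ``known interference'' $\Uv_1$, beamformed along the generalized eigenvector of the pencil in (\ref{eq:pencil2}) with total remaining power $(1-\alpha)P$. Evaluating the generic secrecy-achievable expression $R_1 \le I(V_1;Y_1) - I(V_1;Y_2)$ and $R_2 \le I(V_2;Y_2) - I(V_2;Y_1,V_1)$ (where the second mutual information comes from the double-binning argument) on these Gaussian choices, the first difference collapses to $\log_2\gamma_1(\alpha)$ by direct substitution, and the DPC identity $I(V_2;Y_2) = \log_2(1 + \mathrm{eff.\ SNR}_2)$ together with the secrecy penalty yields $\log_2\gamma_2(\alpha)$ because $\ev_2(\alpha)$ is precisely the maximizer of the corresponding Rayleigh-like quotient. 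Taking the convex hull over time-sharing in $\alpha$ completes the inner bound; the detailed verification is a routine but tedious calculation that I would defer to the encoding/decoding analysis of Section~\ref{sec:in}.

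For the converse, I would use a Sato-type argument: since the secrecy capacity region depends only on the marginal conditional distributions $p(y_1\mid \xv)$ and $p(y_2\mid \xv)$, replacing the noises $(Z_1,Z_2)$ by jointly Gaussian noises with an arbitrary admissible covariance $K_{Z}$ (subject to the correct marginals) does not shrink the region. Over this enhanced channel I would upper bound $R_1$ and $R_2$ by single-letter secrecy-capacity expressions of a suitable ``degraded'' or genie-aided wiretap surrogate, obtaining bounds of the form $R_k \le \log_2\lambda_k(K_Z,\alpha)$ for the generalized eigenvalue associated with each user. Optimizing the outer bound by minimizing over $K_Z$ and matching to the achievable region is the crux: I would choose the correlation $\rho$ of the auxiliary noise so that the two generalized eigenvalue problems induced by the Sato enhancement become identical to (\ref{eq:eig-def1}) and (\ref{eq:pencil2}) respectively, forcing the outer bound to collapse onto the boundary of $\bigcup_\alpha \Rc^{\rm MG}(\alpha)$.

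The main obstacle I anticipate is precisely this final matching step in the converse, namely showing that a single choice of noise correlation $\rho(\alpha)$ simultaneously tightens the upper bounds on $R_1$ and on $R_2$ to $\log_2\gamma_1(\alpha)$ and $\log_2\gamma_2(\alpha)$. The algebraic leverage for this comes from the Karush--Kuhn--Tucker conditions of the generalized Rayleigh quotient: at the extremal $\ev_1$ and $\ev_2(\alpha)$, gradients of the numerator and denominator quadratic forms are parallel, which should force a unique admissible $\rho$ for which both Sato bounds are simultaneously active. Once that matching is established, convexifying via time-sharing yields $\Cc^{\rm MG}_s \subseteq \mathrm{co}\{\bigcup_\alpha \Rc^{\rm MG}(\alpha)\}$, and combined with the inner bound this proves (\ref{eq:sc-mg}).
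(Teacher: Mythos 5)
Your overall architecture matches the paper's: achievability by secret dirty-paper coding with Gaussian rank-one beams along generalized eigenvectors, and a converse via a Sato-type outer bound over correlated-noise channels sharing the original marginals. But two of your steps would not go through as written. On achievability, you have the DPC order reversed for the parametrization you are targeting. To get $R_1\le\log_2\gamma_1(\alpha)$ with $K_{\Uv_1}=\alpha P\ev_1\ev_1^{H}$, it is user 1's codeword that must be dirty-paper encoded against $\Uv_2$ (the paper takes $\Vv_1=\Uv_1+\bv\hv^{H}\Uv_2$ and $\Vv_2=\Uv_2$), so that $I(\Vv_1;Y_1)-I(\Vv_1;\Vv_2)=\log_2(1+\hv^{H}K_{\Uv_1}\hv)$ while the leakage $I(\Vv_1;Y_2|\Vv_2)=\log_2(1+\gv^{H}K_{\Uv_1}\gv)$ is interference-free. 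With your choice ($\Uv_2$ encoded against $\Uv_1$), user 1 sees $\Uv_2$ as noise and the first bound does not collapse to $\log_2\gamma_1(\alpha)$; you would instead land on the $\beta$-parametrization of Corollary~\ref{cor:GBC-2} with the covariance roles swapped. Relatedly, the expression you quote for $R_1$, namely $I(\Vv_1;Y_1)-I(\Vv_1;Y_2)$, is not the double-binning bound: the correct penalty is $I(\Vv_1;Y_2|\Vv_2)+I(\Vv_1;\Vv_2)$ as in Lemma~\ref{lem:InBC}, and the extra $I(\Vv_1;\Vv_2)$ term is precisely what the DPC identity is needed to absorb.

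On the converse, the matching step you correctly identify as the crux is missing its key ingredient. For any fixed noise correlation the Sato bound is a \emph{union over all admissible input covariances} $K_{\Xv}$, so you must show that every $K_{\Xv}$ with $\tr(K_{\Xv})\le P$ — not just the rank-two optimizers aligned with the eigenvectors — produces a rate pair lying in some $\Rc^{\rm MG}(\alpha)$. The paper accomplishes this with a single, $\alpha$-independent correlation $\rho_{\rm o}=\gv^{H}\ev_1/\hv^{H}\ev_1$ (admissible because $\lambda_1\ge1$), an intermediate-value argument (Lemma~\ref{lem:Ka}) assigning each $K_{\Xv}$ to an $\alpha$ with $L(K_{\Xv},\alpha)=0$, and an orthogonality relation between $\hv-\rho_{\rm o}\gamma_1(\alpha)\gv$ and $\gv-\rho_{\rm o}^{*}\gamma_2(\alpha)\hv$ (Lemma~\ref{lem:orth}) that lets the trace constraint split the power between the two bounds. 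Your plan of choosing an $\alpha$-dependent $\rho(\alpha)$ and invoking KKT stationarity of the Rayleigh quotients does not address this quantification over $K_{\Xv}$: for each fixed $\rho(\alpha)$ the union still ranges over all covariances, including those ``belonging'' to other values of $\alpha$, and nothing in your argument places those inside $\bigcup_{\alpha'}\Rc^{\rm MG}(\alpha')$. Until that assignment of arbitrary $K_{\Xv}$ to values of $\alpha$ is supplied, the converse is not established.
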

\begin{proof}
We provide the achievability proof in Section~\ref{sec:in} based on a secret
dirty paper coding scheme, and show the converse proof in Section~\ref{sec:out}
based on a Sato-type outer bound.
\end{proof}

Based on Theorem~\ref{thm:GBC}, we can calculate the boundary of the secrecy
capacity region $\Cc^{\rm MG}_{s}$ by choosing $\alpha$ to trade off the rate
$R_1$ for the rate $R_2$. In particular, when $\alpha=1$, we obtain
\begin{align}
&           & \gamma_1(1)&=\frac{\ev_1^{H}(I+P\hv\hv^{H})\ev_1}{\ev_1^{H}(I+P\gv\gv^{H})\ev_1}=\lambda_1&\label{eq:cal-g1}\\
& \text{and}& \gamma_2(1)&=1&
\end{align}
where (\ref{eq:cal-g1}) follows from the definitions of $\lambda_1$ and $\ev_1$
in (\ref{eq:eig-def1}). Theorem~\ref{thm:GBC} implies that the rate pair
$(\log_2\lambda_1, 0)$ is achievable. In fact, this rate pair is the corner
point corresponding to the maximum achievable rate of user 1 in the capacity
region $\Cc^{\rm MG}_{s}$.
\begin{corollary} \label{cor:maxr1}
For the MGBC-CM, the maximum secrecy rate of user 1 is given by
\begin{align}
R_{1,\max} =\max_{0\le\alpha\le1} \log_2\gamma_1(\alpha) = \log_2\lambda_1
\end{align}
where $\lambda_1$ is the largest generalized eigenvalue of the pencil $(I+P\hv
\hv^{H}, I+P\gv \gv^{H})$.
\end{corollary}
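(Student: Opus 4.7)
The plan is to reduce the maximization of $R_1$ over the capacity region $\Cc^{\rm MG}_{s}$ to a one-dimensional optimization of $\gamma_1(\alpha)$ over $\alpha\in[0,1]$, and then to show that this optimum is attained at $\alpha=1$, where $\gamma_1(1)=\lambda_1$ has already been computed in (\ref{eq:cal-g1}).

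The first step is to observe that, since $R_1$ is a linear functional of the rate pair $(R_1,R_2)$, its supremum over the convex hull $\Cc^{\rm MG}_{s}={\rm co}\{\bigcup_\alpha \Rc^{\rm MG}(\alpha)\}$ equals its supremum over the generating union $\bigcup_\alpha \Rc^{\rm MG}(\alpha)$. Within each region $\Rc^{\rm MG}(\alpha)$, the defining constraints (\ref{eq:sc-rate}) give $\sup R_1=\log_2\gamma_1(\alpha)$, attained at the corner point $(\log_2\gamma_1(\alpha),0)$. Therefore $R_{1,\max}=\max_{0\le\alpha\le1}\log_2\gamma_1(\alpha)$, matching the first expression in the corollary statement.

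The second step is to show that $\gamma_1(\alpha)$ is non-decreasing on $[0,1]$, so that its maximum is attained at $\alpha=1$. I would pre-multiply the defining eigenvalue relation (\ref{eq:eig-def1}) by $\ev_1^{H}$ and use $\ev_1^{H}\ev_1=1$ to obtain $1+P|\hv^{H}\ev_1|^2=\lambda_1(1+P|\gv^{H}\ev_1|^2)$; combined with $\lambda_1\ge1$ from Lemma~\ref{lem:geig}, this forces $|\hv^{H}\ev_1|^2\ge|\gv^{H}\ev_1|^2$. Routine differentiation of the expression in (\ref{eq:def-gm-1}) then produces a derivative whose sign is that of $|\hv^{H}\ev_1|^2-|\gv^{H}\ev_1|^2\ge 0$, so $\gamma_1$ is non-decreasing on $[0,1]$ and attains its maximum at $\alpha=1$, where $\gamma_1(1)=\lambda_1$ by (\ref{eq:cal-g1}). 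The corollary is essentially immediate once Lemma~\ref{lem:geig} and the Rayleigh-quotient identity derived from (\ref{eq:eig-def1}) are available; the only small subtlety worth flagging is the linear-functional argument that passes the optimization from the convex hull to the generating union, and I do not anticipate a genuine obstacle.
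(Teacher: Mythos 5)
Your proposal is correct and follows essentially the same route as the paper's proof in Appendix~\ref{app:sec3}: reduce to maximizing $\log_2\gamma_1(\alpha)$ over $\alpha$, then show $\gamma_1$ is nondecreasing by differentiating (\ref{eq:def-gm-1}) and using the Rayleigh-quotient identity from (\ref{eq:eig-def1}) together with $\lambda_1\ge 1$ to conclude $|\hv^{H}\ev_1|^2\ge|\gv^{H}\ev_1|^2$. The only difference is that you make explicit the linear-functional step passing from the convex hull to the generating union, which the paper leaves implicit.
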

\begin{proof}
See Appendix~\ref{app:sec3}.
\end{proof}

\begin{example} (MISO Wiretap Channels)
A special case of the MGBC-CM model is the Gaussian MISO wiretap channel
studied in \cite{Li:CISS:07,Wornell:ISIT:07,Ulukus:ISIT:07}, where the
transmitter sends confidential information to only one user and treats another
user as an eavesdropper. Let us consider a Gaussian MISO wiretap channel
modeled in (\ref{eq:miso}), where user 1 is the legitimate receiver and user 2
is the eavesdropper. Corollary~\ref{cor:maxr1} implies that the secrecy
capacity of the Gaussian MISO wiretap channel corresponds to the corner point
of $\Cc^{\rm MG}_{s}$. Hence, the secrecy capacity of the Gaussian MISO wiretap
channel is given by
\begin{align}
C^{\rm MISO}_{s}=\log_2 \lambda_1,
\end{align}
which coincides with the result of \cite{Wornell:ISIT:07}.
\end{example}

For the MGBC-CM, the actions of user 1 and user 2 are symmetric to each other,
i.e., each user decodes its own message and eavesdrops the confidential
information belonging to another user. Based on symmetry of this two-user BC
model, we can express the secrecy capacity region $\Cc^{\rm MG}_{s}$ in an
alternative way.
\begin{corollary} \label{cor:GBC-2}
For an MGBC-CM modeled in (\ref{eq:miso}), the secrecy capacity region can be
written as
\begin{align}
\Cc^{\rm MG}_{s}= {\rm co} \left\{\bigcup_{0\le \beta \le 1} \Rc^{\rm
MG-2}(\beta)\right\} \label{eq:sc-mg-2}
\end{align}
where ${\Rc}^{\rm MG-2}(\beta)$ denotes the union of all $(R_1,R_2)$ satisfying
\begin{align}
&          &  0&\le R_1\le \log_2 \xi_1(\beta) &\notag\\
&\text{and} & 0&\le R_2\le \log_2 \xi_2(\beta), & \label{eq:sc-rate-2}
\end{align}
$\xi_1(\beta)$ is the largest generalized eigenvalue of the pencil
\begin{align}
\left(I+\frac{(1-\beta) P}{1+\beta P |\hv^{H}\ev_2|^2}\hv\hv^{H},\;
I+\frac{(1-\beta) P}{1+ \beta P |\gv^{H}\ev_2|^2}\gv \gv^{H}\right)
\label{eq:pencil3}
\end{align}
and
\begin{align}
\xi_2(\beta)=\frac{1+\beta P|\gv^{H}\ev_2|^2}{1+ \beta P |\hv^{H}\ev_2|^2}.
\end{align}
\end{corollary}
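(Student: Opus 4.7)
The plan is to obtain the representation in Corollary~\ref{cor:GBC-2} by applying Theorem~\ref{thm:GBC} to the \emph{relabeled} MGBC-CM in which the roles of user~1 and user~2 are interchanged, and then verifying that the secrecy capacity region is invariant under this relabeling.

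First I would observe that the MGBC-CM in (\ref{eq:miso}) is completely symmetric in the pair $(\hv,\gv)$: replacing $(\hv,\gv,W_1,W_2,Y_1^n,Y_2^n)$ by $(\gv,\hv,W_2,W_1,Y_2^n,Y_1^n)$ yields another MGBC-CM whose encoding/decoding setup and secrecy constraints (\ref{eq:equiv}) are identical up to the obvious swap of indices. Consequently a rate pair $(R_1,R_2)$ is achievable in the original channel if and only if $(R_2,R_1)$ is achievable in the swapped channel, so the two secrecy capacity regions are reflections of one another across the line $R_1=R_2$; but since each is the secrecy capacity region of a channel of the same form, they must in fact describe the same set $\Cc^{\rm MG}_{s}$.

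Next I would apply Theorem~\ref{thm:GBC} to the swapped channel. In the swapped channel the pencil $(I+P\hv\hv^{H},I+P\gv\gv^{H})$ appearing in (\ref{eq:eig-def1}) becomes $(I+P\gv\gv^{H},I+P\hv\hv^{H})$, whose largest generalized eigenvalue and eigenvector are, by (\ref{eq:eig-def2}), precisely $\lambda_2$ and $\ev_2$. Substituting $\hv\leftrightarrow\gv$ and $\ev_1\to\ev_2$ into (\ref{eq:def-gm-1}) and (\ref{eq:pencil2}) and writing the parameter as $\beta$, the function $\gamma_1$ of Theorem~\ref{thm:GBC} becomes exactly
\begin{align*}
\frac{1+\beta P|\gv^{H}\ev_2|^2}{1+\beta P|\hv^{H}\ev_2|^2}=\xi_2(\beta),
\end{align*}
while $\gamma_2$ becomes the largest generalized eigenvalue of the pencil in (\ref{eq:pencil3}), namely $\xi_1(\beta)$. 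Theorem~\ref{thm:GBC} applied to the swapped channel therefore states that the pair $(R_2,R_1)$ with $0\le R_2\le\log_2\xi_2(\beta)$ and $0\le R_1\le\log_2\xi_1(\beta)$ traces, as $\beta$ ranges over $[0,1]$, a family $\Rc^{\rm MG-2}(\beta)$ whose convex hull is the secrecy capacity region of the swapped channel.

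Finally I would combine the two observations: by the symmetry argument of the first paragraph, the convex hull in (\ref{eq:sc-mg-2}) equals $\Cc^{\rm MG}_{s}$, which establishes the corollary. The only non-routine point to confirm is that the relabeling truly leaves the achievability and converse arguments of Theorem~\ref{thm:GBC} intact; this is immediate since both the channel model (\ref{eq:miso}), the power constraint $\tr(K_{\Xv})\le P$, and the equivocation requirements (\ref{eq:equiv}) are symmetric in the indices $1$ and $2$, so no substantive obstacle arises.
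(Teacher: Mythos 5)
Your proposal is correct and follows essentially the same route as the paper, which disposes of this corollary in one line by ``reversing the roles of user 1 and user 2'' in the proof of Theorem~\ref{thm:GBC}; you have simply made explicit the relabeling $(\hv,\gv,\ev_1,\lambda_1)\to(\gv,\hv,\ev_2,\lambda_2)$ under which $\gamma_1$ becomes $\xi_2$ and $\gamma_2$ becomes $\xi_1$, together with the symmetry of the channel model, power constraint, and secrecy requirements that justifies the swap. No gap.
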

\begin{proof}
The derivation follows from the same approach of the proof for
Theorem~\ref{thm:GBC} by reversing the roles of user $1$ and user $2$.
\end{proof}

\begin{remark}
Theorem~\ref{thm:GBC} and Corollary~\ref{cor:GBC-2} imply that if $\alpha$ and
$\beta$ satisfy the implicit function $\gamma_1(\alpha)=\xi_1(\beta)$, then
$${\Rc}^{\rm MG}(\alpha)={\Rc}^{\rm MG-2}(\beta).$$
For example, it is easy to check ${\Rc}^{\rm MG}(1)={\Rc}^{\rm MG-2}(0)$.
\end{remark}

Now, by applying Corollary~\ref{cor:GBC-2} and setting $\beta=1$, we can show
that the rate pair $(0,  \log_2\lambda_2)$ is the corner point corresponding to
the maximum achievable rate of user 2 in the capacity region $\Cc^{\rm
MG}_{s}$.
\begin{corollary} \label{cor:maxr2}
For the MGBC-CM, the maximum secrecy rate of user 2 is given by
\begin{align}
R_{2,\max} =\log_2\lambda_2
\end{align}
where $\lambda_2$ is the largest generalized eigenvalue of the pencil $(I+P\gv
\gv^{H}, I+P\hv \hv^{H})$.
\end{corollary}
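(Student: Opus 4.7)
The plan is to parallel the proof of Corollary~\ref{cor:maxr1} but using the alternative parameterization furnished by Corollary~\ref{cor:GBC-2}. Since Corollary~\ref{cor:GBC-2} expresses $\Cc^{\rm MG}_{s}$ as the convex hull of $\bigcup_{0\le\beta\le 1}\Rc^{\rm MG-2}(\beta)$, and the $R_2$-coordinate of any point in the convex hull is bounded above by the supremum of $R_2$-coordinates over the union, the task reduces to computing $\max_{0\le\beta\le 1}\log_2\xi_2(\beta)$ and exhibiting a $\beta$ that attains this maximum.

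First I would evaluate the defining expressions at $\beta=1$. In the pencil (\ref{eq:pencil3}) the factor $(1-\beta)$ multiplies both $\hv\hv^H$ and $\gv\gv^H$, so at $\beta=1$ both matrices collapse to $I$ and $\xi_1(1)=1$, giving $\log_2\xi_1(1)=0$. For $\xi_2(1)$, I would invoke the Rayleigh-quotient identity analogous to (\ref{eq:cal-g1}): using (\ref{eq:eig-def2}) together with $\ev_2^H\ev_2=1$,
\begin{align*}
\xi_2(1)=\frac{1+P|\gv^H\ev_2|^2}{1+P|\hv^H\ev_2|^2}=\frac{\ev_2^H(I+P\gv\gv^H)\ev_2}{\ev_2^H(I+P\hv\hv^H)\ev_2}=\lambda_2.
\end{align*}
Thus the rate pair $(0,\log_2\lambda_2)$ lies in $\Rc^{\rm MG-2}(1)\subseteq\Cc^{\rm MG}_{s}$, establishing the achievability direction $R_{2,\max}\ge\log_2\lambda_2$.

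For the matching upper bound, I would show that $\xi_2(\beta)$ is monotonically nondecreasing in $\beta\in[0,1]$. Lemma~\ref{lem:geig} yields $\lambda_2\ge 1$, and combined with the Rayleigh-quotient evaluation just obtained this forces $|\gv^H\ev_2|^2\ge|\hv^H\ev_2|^2$. A direct differentiation of the scalar ratio $\xi_2(\beta)=(1+\beta P|\gv^H\ev_2|^2)/(1+\beta P|\hv^H\ev_2|^2)$ then yields a nonnegative derivative, so $\xi_2(\beta)\le\xi_2(1)=\lambda_2$ for every $\beta\in[0,1]$, closing the gap. I do not expect any genuine obstacle here: the argument is the symmetric counterpart of the one used for Corollary~\ref{cor:maxr1} in Appendix~\ref{app:sec3}, with the roles of $\hv$ and $\gv$ exchanged throughout.
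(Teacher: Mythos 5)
Your proposal is correct and follows essentially the same route as the paper: the paper proves Corollary~\ref{cor:maxr2} by invoking symmetry with the proof of Corollary~\ref{cor:maxr1} (Appendix~\ref{app:sec3}), which likewise evaluates the rate expression at the endpoint and establishes monotonicity via a nonnegative derivative using the eigenvalue identity and $\lambda\ge 1$ from Lemma~\ref{lem:geig}. You have simply written out explicitly, in terms of $\beta$, $\xi_2$, and $\ev_2$, the symmetric argument the paper leaves implicit.
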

\begin{proof}
The derivation follows from the same approach of the proof for
Corollary~\ref{cor:maxr1}.
\end{proof}
Corollaries~\ref{cor:maxr1} and~\ref{cor:maxr2} imply that for the MGBC-CM,
both users can achieve positive rates with information-theoretic secrecy if and
only if $\lambda_1>1$ and $\lambda_2>1$. Lemma~\ref{lem:geig} illusrtates that
this condition can be ensured when the attenuation vectors $\hv$ and $\gv$ are
linear independent.

\section{Secret DPC Coding Scheme and Achievability Proof} \label{sec:in}

We first briefly review the prior information-theoretic result on the
achievable rate region for the {\it BC with confidential messages} (BC-CM) of
\cite{Liu07it}. Based on this result, we develop the achievable secret coding
scheme for the MGBC-CM and find the capacity achieving input covariance matrix.

\subsection{Double-Binning Inner bound for the BC-CM}

An achievable rate region for the BC-CM has been established in \cite{Liu07it}
based on a double-binning scheme that enables both joint encoding at the
transmitter by using Slepian-Wolf binning \cite{Slepian:IT:73} and preserving
confidentiality by using random binning. We summarize the double-binning
codebook and encoding strategy in Appendix~\ref{app:sec4} for completeness.

\begin{lemma} (\cite[Theorem~3]{Liu07it}) \label{lem:InBC}
Let $\Vv_1$ and $\Vv_2$ be auxiliary random variables, $\Omega$ denote the
class of joint probability densities $p(\vv_1,\vv_2,\xv,y_1,y_2)$ that factor
as
\begin{align}
p(\vv_1,\vv_2)p(\xv|\vv_1,\vv_2)p(y_1,y_2|\xv),
\end{align}
and ${\Rc}_{\rm I}(\pi)$ denote the union of all $(R_1,R_2)$ satisfying
\begin{align}
&            & 0 &\le R_1 \le I(\Vv_1;Y_1)-I(\Vv_1;Y_2|\Vv_2)-I(\Vv_1;\Vv_2) &\label{eq:BC-IN-R1}\\
& \text{and} & 0 &\le R_2 \le I(\Vv_2;Y_2)-I(\Vv_2;Y_1|\Vv_1)-I(\Vv_1;\Vv_2)
\label{eq:BC-IN-R2}
\end{align}
for a given joint probability density $\pi\in \Omega$. For the BC-CM, any rate
pair
\begin{align}
(R_1,R_2)\in {\rm co} \left\{\bigcup_{\pi\in\Omega} \Rc_{\rm I}(\pi)\right\}
\label{eq:inner}
\end{align}
is achievable.
\end{lemma}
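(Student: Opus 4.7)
The plan is to prove the inner bound by a \emph{double-binning} construction that combines Marton-style joint encoding with wiretap-style random binning. Fix an input distribution $p(\vv_1,\vv_2)p(\xv|\vv_1,\vv_2)\in\Omega$, a small $\delta>0$, and auxiliary rates $R_1',R_2'>0$ to be specified. For each $k\in\{1,2\}$, I would independently generate $2^{n(R_k+R_k')}$ sequences $\vv_k^n(w_k,\ell_k)$ i.i.d.\ from $\prod_i p(v_{k,i})$, indexed by $w_k\in[1{:}2^{nR_k}]$ (the confidential message) and $\ell_k\in[1{:}2^{nR_k'}]$ (an auxiliary index that plays a \emph{dual} role: it is the Marton bin index used to find a jointly typical pair, and simultaneously the wiretap randomization index used to confuse the opposite user).

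Given $(w_1,w_2)$, the encoder searches for $(\ell_1,\ell_2)$ such that $(\vv_1^n(w_1,\ell_1),\vv_2^n(w_2,\ell_2))$ is jointly $\delta$-typical under $p(\vv_1,\vv_2)$ and then draws $\xv^n$ i.i.d.\ through $p(\xv|\vv_1,\vv_2)$; the mutual-covering lemma guarantees existence w.h.p.\ when $R_1'+R_2'>I(\Vv_1;\Vv_2)+\delta$. User $k$ then decodes $(\hat w_k,\hat\ell_k)$ from $Y_k^n$ by joint typicality, which succeeds with vanishing error probability whenever $R_k+R_k'<I(\Vv_k;Y_k)-\delta$. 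I would set
\begin{align*}
R_1' &= I(\Vv_1;Y_2|\Vv_2)+I(\Vv_1;\Vv_2)+2\delta,\\
R_2' &= I(\Vv_2;Y_1|\Vv_1)+I(\Vv_1;\Vv_2)+2\delta,
\end{align*}
which automatically satisfies the Marton constraint and, substituted into the decoding constraints, recovers precisely (\ref{eq:BC-IN-R1})--(\ref{eq:BC-IN-R2}).

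For the equivocation $\tfrac{1}{n}H(W_1|Y_2^n)\ge R_1-\epsilon$, I would use
\begin{align*}
H(W_1|Y_2^n) &\ge H(W_1|Y_2^n,\Vv_2^n)\\
&= H(W_1,\ell_1|Y_2^n,\Vv_2^n)-H(\ell_1|W_1,Y_2^n,\Vv_2^n),
\end{align*}
bounding the first term by approximately $n(R_1+R_1')-nI(\Vv_1;Y_2,\Vv_2)\approx nR_1+2n\delta$ via a typicality / Fano argument applied along $(W_1,\ell_1)\to\Vv_1^n\to(Y_2^n,\Vv_2^n)$, and bounding the second by $o(n)$ via a channel-resolvability / covering argument that exploits $R_1'\ge I(\Vv_1;Y_2,\Vv_2)+\delta=I(\Vv_1;Y_2|\Vv_2)+I(\Vv_1;\Vv_2)+\delta$. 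The symmetric argument handles $W_2$ at user~1.

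The main obstacle I anticipate is showing that a \emph{single} random codebook realization simultaneously accomplishes all three tasks—Marton pairing during encoding, reliable decoding at both receivers, and strong equivocation at both eavesdroppers. The crux is the resolvability step $H(\ell_1|W_1,Y_2^n,\Vv_2^n)=o(n)$ (and its symmetric counterpart), where one must carefully track the chain $(W_1,\ell_1)\to\Vv_1^n\to Y_2^n$ \emph{while conditioning} on the Marton-correlated $\Vv_2^n$; the conditioning is delicate because $\Vv_2^n$ is neither independent of $\Vv_1^n$ (by the Marton pairing) nor a deterministic function of the messages. A standard expurgation together with averaging over the codebook ensemble, combined with the choices of $R_k'$ above, should close all constraints simultaneously and establish the claimed region.
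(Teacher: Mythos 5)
The paper does not actually prove this lemma: it cites \cite{Liu07it} for the proof, offers only a heuristic derivation (Marton's sum rate minus the leakage $I(\Vv_1,\Vv_2;Y_2)$, which indeed reduces algebraically to (\ref{eq:BC-IN-R1})), and describes the double-binning codebook in Appendix~C without the error or equivocation analysis. Your rate bookkeeping matches that heuristic exactly, and your construction is in the same spirit. However, your code structure differs from the paper's in a way that creates a genuine gap precisely at the step you yourself identify as the crux. In the paper's double-binning scheme each bin $w_k$ is split into sub-bins of rate $R_k^{\star}=I(\Vv_k;Y_j|\Vv_j)$ containing $2^{nI(\Vv_1;\Vv_2)}$ codewords each; the sub-bin index $j_k$ is drawn \emph{uniformly at random} and only the innermost index is subject to the (one-sided) covering search. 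That separation is what makes the equivocation computation go through: $\Vv_1^n$ is exactly uniform over the $2^{n(R_1^{\star}+R^{\ddag})}$ codewords of bin $w_1$ given $W_1$, so $H(\Vv_1^n\mid W_1)$ is known, and the eavesdropper who is handed $W_1$ faces a list of size exactly $2^{nI(\Vv_1;Y_2,\Vv_2)}$, which is resolvable from $(Y_2^n,\Vv_2^n)$ by Fano.

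In your merged-index scheme, $\ell_1$ serves simultaneously as the Marton covering index and the secrecy randomization index, and it is chosen by a two-sided typicality search over $(\ell_1,\ell_2)$. Consequently $\ell_1$ is \emph{not} uniform given $W_1$ (its distribution depends on the codebook and on $W_2$), and the bound you need for the first term, $H(W_1,\ell_1\mid Y_2^n,\Vv_2^n)\gtrsim n(R_1+R_1')-nI(\Vv_1;Y_2,\Vv_2)$, presupposes $H(\Vv_1^n\mid \Vv_2^n)\approx n(R_1+R_1')-nI(\Vv_1;\Vv_2)$, which is exactly the statement you have not established; asserting it is circular with the difficulty you flag. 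The second term is also not $o(n)$ as claimed: with $R_1'=I(\Vv_1;Y_2,\Vv_2)+2\delta$ the index $\ell_1$ sits $2\delta$ above the decodability threshold, so Fano only yields $H(\ell_1\mid W_1,Y_2^n,\Vv_2^n)\le 2n\delta+n\epsilon_n$ via list decoding (this part is repairable, but it must be stated). To close the argument you should either adopt the paper's two-level structure, keeping a uniform, search-independent randomization index of rate $I(\Vv_1;Y_2|\Vv_2)$ separate from a covering index of rate $I(\Vv_1;\Vv_2)$, or supply an explicit lemma controlling the entropy of the pair selected by mutual covering; as written, the central equivocation bound is asserted rather than proved.
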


The proof of Lemma~\ref{lem:InBC} can be found in \cite{Liu07it}. Here, we
provide an alternative view on this result. Since randomization can increase
secrecy, we employ stochastic encoding at the transmitter so that the size of
the secret codebook is larger than the size of message set. Let $R'$ denote the
redundant rate used to prevent the confidentiality. The best known achievable
region for a general BC was found by Marton of \cite{marton:IT:77}. Now, for a
given joint density $p(\vv_1,\vv_2,\xv)$, a special case of the Marton sum rate
(without a common rate) is given by
\begin{align}
R_1+R_2+R' \le I(\Vv_1;Y_1)+I(\Vv_2;Y_2)-I(\Vv_1;\Vv_2). \label{eq:marton}
\end{align}
On the other hand, the total (both the intended and the eavesdropped)
information rate obtained by user~2 is limited by $I(\Vv_1,\Vv_2;Y_2)$.
Intuitively, to keep the message $W_1$ secret from user~2, the redundant rate
$R'$ should satisfy that
\begin{align}
R_2+R' &\ge I(\Vv_1,\Vv_2;Y_2).
\end{align}
This implies that to satisfy the information-theoretic secrecy requirement, the
achievable secrecy rate of user 1 can be written as
\begin{align}
R_1 &\le [I(\Vv_1;Y_1)+I(\Vv_2;Y_2)-I(\Vv_1;\Vv_2)]-I(\Vv_1,\Vv_2;Y_2).
\label{eq:epin1}
\end{align}
Similarly, the achievable secrecy rate of user 2 can be written as
\begin{align}
R_2 \le [I(\Vv_1;Y_1)+I(\Vv_2;Y_2)-I(\Vv_1;\Vv_2)]-I(\Vv_1,\Vv_2;Y_1).
\label{eq:epin2}
\end{align}
Bounds (\ref{eq:epin1}) and (\ref{eq:epin2}) lead to the achievable secrecy
rate region in Lemma~\ref{lem:InBC}.

\begin{remark}
For the BC with confidential messages, one can employ joint encoding at the
transmitter. However, to preserve confidentiality, both achievable rate
expressions in (\ref{eq:BC-IN-R1}) and (\ref{eq:BC-IN-R2}) include a penalty
term $I(\Vv_1;\Vv_2)$. Hence, compared with Marton's achievable region
\cite{marton:IT:77} for a general BC, here, one need to pay ``double'' for
jointly encoding at the transmitter.
\end{remark}


\subsection{Secret DPC Scheme for the MGBC-CM}

The achievable strategy in Lemma~\ref{lem:InBC} introduces a double-binning
coding scheme. However, when the rate region (\ref{eq:inner}) is used as a
constructive technique, it not clear how to choose the auxiliary random
variables $\Vv_1$ and $\Vv_2$ to implement the double-binning codebook, and
hence, one has to ``guess'' the density of $p(\vv_1,\vv_2,\xv)$. Here, we
employ the DPC technique with the double-binning code structure to develop the
{\it secret DPC} (S-DPC) achievable rate region for the MGBC-CM.


For the MGBC-CM, we consider a secret dirty-paper encoder with Gaussian
codebooks as follows. First, we sperate the channel input $\Xv$ into two random
vectors $\Uv_1$ and $\Uv_2$ so that
\begin{align}
\Uv_1+\Uv_2=\Xv. \label{eq:u1u2}
\end{align}
We choose $\Uv_1$ and $\Uv_2$ as well as auxiliary random variables $\Vv_1$ and
$\Vv_2$ as follows:
\begin{align}
\Uv_1&\thicksim \mathcal{CN}(0,K_{\Uv_1}),\notag\\
\Uv_2&\thicksim \mathcal{CN}(0,K_{\Uv_2}),~\text{independent of }\Uv_1\notag\\
\Vv_1&=\Uv_1+\bv \hv^{H}\Uv_2 \quad \text{and}\quad \Vv_2=\Uv_2\label{eq:rvs}
\end{align}
where $K_{\Uv_1}$ and $K_{\Uv_2}$ are covariance matrices of $\Uv_1$ and
$\Uv_2$, respectively, and
\begin{align}
\bv=\frac{K_{\Uv_1}\hv}{1+\hv^{H} K_{\Uv_1} \hv}.
\end{align}
Based on the conditions (\ref{eq:rvs}) and Lemma~\ref{lem:InBC}, we obtain a
S-DPC rate region for the MGBC-CM as follows.

\begin{lemma}\label{lem:GBCin} {\rm [S-DPC region]}
Let  ${\Rc}_{\rm I}^{\rm S-DPC}(K_{\Uv_1},K_{\Uv_2})$ denote the union of all
$(R_1,R_2)$ satisfying
\begin{align}
& &0&\le R_1\le \log_2 \frac{1+\hv^{H} K_{\Uv_1} \hv}{1+\gv^{H} K_{\Uv_1}
\gv}& \label{eq:dpc-r1}\\
& \text{and} & 0&\le R_2\le \log_2 \frac{1+\gv^{H} (K_{\Uv_1}+K_{\Uv_2})
\gv}{1+\hv^{H} (K_{\Uv_1}+K_{\Uv_2}) \hv}+ \log_2 \frac{1+\hv^{H}
K_{\Uv_1}\hv}{1+\gv^{H} K_{\Uv_1} \gv}. \label{eq:dpc-r2}&
\end{align}
Then, any rate pair
\begin{align}
(R_1,R_2)\in {\rm co} \left\{\bigcup_{\tr(K_{\Uv_1}+K_{\Uv_2})\le P} \Rc_{\rm
I}^{\rm S-DPC}(K_{\Uv_1},K_{\Uv_2})\right\} \label{eq:dpc-r}
\end{align}
is achievable for the MGBC-CM.
\end{lemma}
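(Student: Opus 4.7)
The plan is to specialize the general double-binning bound of Lemma~\ref{lem:InBC} to the Gaussian choices prescribed in (\ref{eq:u1u2})--(\ref{eq:rvs}) and show that the two mutual-information expressions in (\ref{eq:BC-IN-R1}) and (\ref{eq:BC-IN-R2}) collapse exactly to the right-hand sides of (\ref{eq:dpc-r1}) and (\ref{eq:dpc-r2}). It is convenient to use the equivalent ``Marton-minus-eavesdropper'' form (\ref{eq:epin1})--(\ref{eq:epin2}), since that form separates the dirty-paper block $I(\Vv_1;Y_1)-I(\Vv_1;\Vv_2)$ from the purely computational pieces $I(\Vv_2;Y_2)$ and $I(\Vv_1,\Vv_2;Y_k)$.

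First I would verify the Costa ``writing on dirty paper'' identity for the MISO channel of user~1. With $\Vv_2=\Uv_2$ and $Y_1=\hv^{H}(\Uv_1+\Uv_2)+Z_1$, the choice $\Vv_1=\Uv_1+\bv\hv^{H}\Uv_2$ makes $\Uv_2$ play the role of non-causally known interference. A direct entropy calculation, using the prescribed $\bv=K_{\Uv_1}\hv/(1+\hv^{H}K_{\Uv_1}\hv)$, gives $I(\Vv_1;Y_1)-I(\Vv_1;\Vv_2)=\log_2(1+\hv^{H}K_{\Uv_1}\hv)$, exactly as if $\Uv_2$ were absent. This is the step I expect to be the main obstacle: one must check that the MMSE-flavored choice of $\bv$ is indeed the one that makes the $\log\det$'s telescope in the vector/MISO setting, and this is essentially the only place where the specific form of $\bv$ enters.

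Next I would handle the remaining mutual informations, which are routine Gaussian evaluations. For $R_1$ I use $I(\Vv_1;Y_2\mid\Vv_2)=I(\Uv_1;\gv^{H}\Uv_1+Z_2)=\log_2(1+\gv^{H}K_{\Uv_1}\gv)$ since conditioning on $\Vv_2=\Uv_2$ removes the $\gv^{H}\Uv_2$ contribution from $Y_2$ and turns $\Vv_1$ into a shifted copy of $\Uv_1$. Combining with the dirty-paper identity yields $R_1\le\log_2\bigl[(1+\hv^{H}K_{\Uv_1}\hv)/(1+\gv^{H}K_{\Uv_1}\gv)\bigr]$, which is (\ref{eq:dpc-r1}). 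For $R_2$ I use the form (\ref{eq:epin2}), compute $I(\Vv_2;Y_2)=\log_2\bigl[(1+\gv^{H}(K_{\Uv_1}+K_{\Uv_2})\gv)/(1+\gv^{H}K_{\Uv_1}\gv)\bigr]$ by treating $\gv^{H}\Uv_1+Z_2$ as effective noise, and observe that since $(\Vv_1,\Vv_2)$ is an invertible linear function of $(\Uv_1,\Uv_2)$ one has $I(\Vv_1,\Vv_2;Y_1)=I(\Uv_1,\Uv_2;Y_1)=\log_2(1+\hv^{H}(K_{\Uv_1}+K_{\Uv_2})\hv)$. Adding the dirty-paper term $\log_2(1+\hv^{H}K_{\Uv_1}\hv)$ and the $I(\Vv_2;Y_2)$ term and subtracting $I(\Vv_1,\Vv_2;Y_1)$ gives precisely (\ref{eq:dpc-r2}).

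Finally, letting $K_{\Uv_1}$ and $K_{\Uv_2}$ range over all positive semidefinite matrices with $\tr(K_{\Uv_1}+K_{\Uv_2})\le P$ and time-sharing (as already built into the convex hull in Lemma~\ref{lem:InBC}) produces the region (\ref{eq:dpc-r}), which completes the proof.
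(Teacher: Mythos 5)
Your proposal is correct and follows essentially the same route as the paper's proof: specialize Lemma~\ref{lem:InBC} to the prescribed Gaussian auxiliaries, use the Costa/successive dirty-paper identity $I(\Vv_1;Y_1)-I(\Vv_1;\Vv_2)=\log_2(1+\hv^{H}K_{\Uv_1}\hv)$ (which the paper simply imports from Yu--Cioffi rather than re-deriving), and evaluate the remaining Gaussian mutual informations; your use of the form (\ref{eq:epin1})--(\ref{eq:epin2}) is algebraically identical to the paper's manipulation of $I(\Vv_2;Y_1|\Vv_1)+I(\Vv_1;\Vv_2)=I(\Vv_1,\Vv_2;Y_1)-[I(\Vv_1;Y_1)-I(\Vv_1;\Vv_2)]$.
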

\begin{proof}
See the Appendix~\ref{app:sec4}.
\end{proof}

\begin{remark}
We choose the random variables $\Uv_1$, $\Uv_2$, $\Vv_1$, $\Vv_2$ and $\Xv$ as
the same as the classical DPC strategy (e.g., see \cite{Caire:IT:03,Yu:IT:04}).
However, the S-DPC scheme is different from the classical one. The codebook and
the coding structure of the S-DPC scheme is based on the double-binning (see
Appendix~\ref{app:sec4}).
\end{remark}

\subsection{Achievability Proof of Theorem~\ref{thm:GBC}}

The S-DPC achievable rate region (\ref{eq:dpc-r}) requires optimization of the
covariance matrices $K_{\Uv_1}$ and $K_{\Uv_2}$. In order to achievable the
boundary of $\Cc^{\rm MG}_{s}$, we choose $K_{\Uv_1}$ and $K_{\Uv_2}$ as
follows:
\begin{align}
&           & K_{\Uv_1}&= \alpha P \ev_1\ev_1^{H} &\notag\\
&\text{and} & K_{\Uv_2}&= (1-\alpha) P \cv_2(\alpha)\cv_2^{H}(\alpha), \qquad
\text{for}~0\le\alpha\le1 & \label{eq:ku12}
\end{align}
where $\ev_1$ is defined in (\ref{eq:eig-def1}) and $\cv_2(\alpha)$ is a
normalized eigenvector of the pencil (\ref{eq:pencil2}) corresponding to
$\gamma_2(\alpha)$ so that $\cv_2^{H}(\alpha)\cv_2(\alpha)=1$ and
\begin{align}
\left(I+\frac{(1-\alpha) P}{1+\alpha P |\gv^{H}\ev_1|^2}\gv\gv^{H}\right)
\cv_2(\alpha) = \gamma_2(\alpha) \left( I+\frac{(1-\alpha) P}{1+ \alpha P
|\hv^{H}\ev_1|^2}\hv \hv^{H}\right) \cv_2(\alpha). \label{eq:g2c2}
\end{align}
Since $\Uv_1$ and $\Uv_2$ are independent, (\ref{eq:u1u2}) implies that the
input covariance matrix can be written as follows:
\begin{align}
K_{\Xv}&= K_{\Uv_1}+K_{\Uv_2}\notag\\
&= \alpha P \ev_1\ev_1^{H}+(1-\alpha) P \cv_2(\alpha)\cv_2^{H}(\alpha), \qquad
\text{for}~0\le\alpha\le1. \label{eq:kx}
\end{align}
Hence, we have
\begin{align}
\tr(K_{\Xv})=\tr(K_{\Uv_1}+K_{\Uv_2})=P,
\end{align}
i.e., the channel input power constraint is satisfied.

Next, inserting (\ref{eq:ku12}) into (\ref{eq:dpc-r1}) and (\ref{eq:dpc-r2}),
we obtain
\begin{align}
\frac{1+\hv^{H} K_{\Uv_1} \hv}{1+\gv^{H} K_{\Uv_1}
\gv}&=\gamma_1(\alpha)\label{eq:sdpc-r1}
\end{align}
and
\begin{align}
\frac{[1+\gv^{H} (K_{\Uv_1}+K_{\Uv_2}) \gv] [1+\hv^{H} K_{\Uv_1}\hv]}
{[1+\hv^{H} (K_{\Uv_1}+K_{\Uv_2}) \hv][1+\gv^{H} K_{\Uv_1} \gv]}
=\gamma_2(\alpha). \label{eq:sdpc-r2}
\end{align}
where the intermediate steps for deriving (\ref{eq:sdpc-r2}) are given in
Appendix~\ref{app:sec4}. Now, by substituting (\ref{eq:sdpc-r1}) and
(\ref{eq:sdpc-r2}) into Lemma~\ref{lem:GBCin}, we obtain the desired achievable
result.

\begin{remark}
The secrecy capacity region $\Cc^{\rm MG}_{s}$ can be achieved by using the
S-DPC scheme, in which the capacity achieving input covariance matrix is with
rank $2$. Furthermore, by reversing the roles of user $1$ and user $2$, we have
the achievability proof for Corollary~\ref{cor:GBC-2}.
\end{remark}

\section{Sato-Type Outer Bound and Converse Proof} \label{sec:out}

In this section, we first describe a new Sato-type outer bound that can be
applied to both discrete memoryless and Gaussian broadcast channels with
confidential messages. Next, a computable Gaussian version of this bound is
derived for the MGBC-CM. Finally, we prove that the Sato-type outer bound
coincides with the secrecy capacity region $\Cc^{\rm MG}_{s}$.

\subsection{Sato-Type Outer Bound}

We consider an important property for the BC-CM in the following lemma.
\begin{lemma} \label{lem:sam}
Let $\Pc$ denote the set of channels $p_{\Yt_1,\Yt_2|\Xv}$ whose marginal
distributions satisfy
\begin{align}
&            & p_{\Yt_1|\Xv}(y_1|\xv)&=p_{Y_1|\Xv}(y_1|\xv) &\notag \\
&\text{and}  & p_{\Yt_2|\Xv}(y_2|\xv)&=p_{Y_1|\Xv}(y_2|\xv) &
\end{align}
for all $y_1$, $y_2$ and $\xv$. The secrecy capacity region $\Cc^{\rm MG}_{s}$
is the same for the channels $p_{\Yt_1,\Yt_2|\Xv} \in \Pc$.
\end{lemma}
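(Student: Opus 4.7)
The plan is to observe that the achievability conditions defining $\Cc^{\rm MG}_{s}$, namely the maximum error probability and the two equivocation constraints, are all functionals of the single-user marginal distributions of the channel only. Since any two channels in $\Pc$ share the same marginals $p_{Y_1|\Xv}$ and $p_{Y_2|\Xv}$ by definition, they must admit exactly the same set of achievable rate pairs, and therefore the same secrecy capacity region.

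More concretely, I would fix any $(2^{nR_1},2^{nR_2},n)$ secret codebook, i.e., message sets, a stochastic encoder $p(\xv^{n}|w_1,w_2)$ and decoders $\phi_1,\phi_2$, and write the induced joint distribution when this code is used over a channel $p_{\tilde Y_1,\tilde Y_2|\Xv}\in\Pc$. First I would note that for each user $k\in\{1,2\}$, the joint distribution of $(W_1,W_2,\tilde Y_k^{n})$ factors as
\begin{align*}
p(w_1)p(w_2)\,p(\xv^{n}|w_1,w_2)\prod_{i=1}^{n} p_{\tilde Y_k|\Xv}(\tilde y_{k,i}|\xv_i),
\end{align*}
and by hypothesis $p_{\tilde Y_k|\Xv}=p_{Y_k|\Xv}$; hence this joint law is identical to the one induced by the original channel. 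Consequently, the error probability $P_{e,k}^{(n)}=\Pr[\phi_k(\tilde Y_k^{n})\neq W_k]$ is unchanged, and so is the maximum error probability $P_e^{(n)}$. By the same reasoning, the equivocations $\frac{1}{n}H(W_1|\tilde Y_2^{n})$ and $\frac{1}{n}H(W_2|\tilde Y_1^{n})$ depend only on the marginals $p_{Y_2|\Xv}$ and $p_{Y_1|\Xv}$ respectively, and hence are identical to the corresponding quantities computed under the original channel.

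Putting these two observations together, a rate pair $(R_1,R_2)$ satisfies both the reliability condition $P_e^{(n)}\le\epsilon$ and the information-theoretic secrecy conditions in \eqref{eq:equiv} over the original channel if and only if it does so over every channel $p_{\tilde Y_1,\tilde Y_2|\Xv}\in\Pc$. Therefore the set of achievable rate pairs, and hence the secrecy capacity region $\Cc^{\rm MG}_{s}$, is invariant over $\Pc$.

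I do not foresee a real obstacle here; the content is essentially the same observation that Sato made for the ordinary broadcast channel capacity region, extended to include the equivocation constraints. The only subtlety worth flagging is that while the individual marginal distributions are preserved, the joint distribution of $(\tilde Y_1,\tilde Y_2)$ given $\Xv$ can be chosen arbitrarily within $\Pc$; this freedom is exactly what will be exploited in the subsequent derivation of the computable Gaussian Sato-type outer bound, by choosing a correlation between the two noise processes that tightens the bound. The lemma itself, however, only asserts marginal invariance of $\Cc^{\rm MG}_{s}$, which follows directly from the factorization argument above.
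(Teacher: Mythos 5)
Your proposal is correct and follows essentially the same route as the paper's own proof: fix a codebook and encoder, observe that the joint law of $(W_1,W_2,\Yt_k^{n})$ — and hence both $P_{e,k}^{(n)}$ and the equivocation measured at user $k$ — depends only on the marginal $p_{\Yt_k|\Xv}$, which is common to all channels in $\Pc$. The paper's argument is a terser version of exactly this factorization observation, so there is nothing to reconcile.
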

\begin{proof}
We provide the proof in Appendix~\ref{app:sec4}.
\end{proof}

We note that $\Pc$ is the set of channels $p_{\Yt_1,\Yt_2|\Xv}$ that have the
same marginal distributions as the original channel transition density
$p_{Y_1,Y_2|\Xv}$. Lemma~\ref{lem:sam} implies that the secrecy capacity region
$\Cc^{\rm MG}_{s}$ depends only on marginal distributions.

\begin{theorem} \label{thm:out1} 
Let $\Rc_{\rm O}\bigl(P_{\Yt_1,\Yt_2|\Xv}, P_{\Xv}\bigr)$ denote the union of
all rate pairs $(R_1, R_2)$ satisfying
\begin{align}
&     &        R_1&\le I(\Xv;\Yt_1,\Yt_2)-I(\Xv;\Yt_2) & \label{eq:BC-out-R1}\\
& \text{and} & R_2&\le I(\Xv;\Yt_1,\Yt_2)-I(\Xv;\Yt_1) & \label{eq:BC-out-R2}
\end{align}
for given distributions $P_{\Xv}$ and $P_{\Yt_1,\Yt_2|\Xv}$. The secrecy
capacity region $\Cc^{\rm MG}_{s}$ of the BC-CM satisfies
\begin{align}
\Cc^{\rm MG}_{s} \subseteq \bigcap_{P_{\Yt_1,\Yt_2|\Xv}\in \Pc}
\left\{\bigcup_{P_{\Xv}} \Rc_{\rm O}\bigl(P_{\Yt_1,\Yt_2|\Xv},
P_{\Xv}\bigr)\right\}. \label{eq:Sato}
\end{align}
\end{theorem}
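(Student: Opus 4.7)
The plan is three-fold: use Lemma~\ref{lem:sam} to reduce to a surrogate channel, combine Fano's inequality with the equivocation constraint to obtain a raw multi-letter bound, and then single-letterize by exploiting the memoryless structure. Fix any $P_{\Yt_1,\Yt_2|\Xv}\in\Pc$. By Lemma~\ref{lem:sam}, the secrecy capacity region is unchanged when the true channel is replaced by $P_{\Yt_1,\Yt_2|\Xv}$, so it suffices to show that every achievable $(R_1,R_2)$ for the surrogate lies in $\Rc_{\rm O}(P_{\Yt_1,\Yt_2|\Xv},P_{\Xv})$ for some single-letter $P_{\Xv}$; intersecting over $P_{\Yt_1,\Yt_2|\Xv}\in\Pc$ then yields (\ref{eq:Sato}).

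For an achievable code, Fano's inequality gives $H(W_1|\Yt_1^n)\le n\delta_n$ and the secrecy requirement gives $I(W_1;\Yt_2^n)\le n\epsilon$. Chaining these with data processing along $W_1\to\Xv^n\to(\Yt_1^n,\Yt_2^n)$, I would derive
\begin{align*}
nR_1 &\le I(W_1;\Yt_1^n)+n\delta_n \le I(W_1;\Yt_1^n,\Yt_2^n)+n\delta_n\\
&= I(W_1;\Yt_2^n)+I(W_1;\Yt_1^n|\Yt_2^n)+n\delta_n\\
&\le I(\Xv^n;\Yt_1^n|\Yt_2^n)+n(\epsilon+\delta_n).
\end{align*}
Since the surrogate channel is memoryless in $i$, the chain rule together with ``conditioning reduces entropy'' gives
\begin{align*}
I(\Xv^n;\Yt_1^n|\Yt_2^n) = H(\Yt_1^n|\Yt_2^n)-H(\Yt_1^n|\Xv^n,\Yt_2^n) \le \sum_{i=1}^{n}\bigl[H(\Yt_{1,i}|\Yt_{2,i})-H(\Yt_{1,i}|\Xv_i,\Yt_{2,i})\bigr] = \sum_{i=1}^{n}I(\Xv_i;\Yt_{1,i}|\Yt_{2,i}),
\end{align*}
and the analogous sequence of inequalities handles $R_2$ with the roles of $\Yt_1$ and $\Yt_2$ swapped.

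To convert the resulting Ces\`aro average into a single-letter expression I would introduce a time-sharing index $Q$ uniform on $\{1,\ldots,n\}$ and independent of $(\Xv^n,\Yt_1^n,\Yt_2^n)$, and set $\Xv=\Xv_Q$, $\Yt_1=\Yt_{1,Q}$, $\Yt_2=\Yt_{2,Q}$. Because the same kernel $P_{\Yt_1,\Yt_2|\Xv}$ acts at every time index, one has the Markov chain $Q\to\Xv\to(\Yt_1,\Yt_2)$, which forces $I(Q;\Yt_1|\Xv,\Yt_2)=0$. A short chain-rule manipulation then yields
\begin{align*}
\tfrac{1}{n}\sum_{i=1}^{n}I(\Xv_i;\Yt_{1,i}|\Yt_{2,i}) = I(\Xv;\Yt_1|\Yt_2,Q) \le I(\Xv;\Yt_1|\Yt_2) = I(\Xv;\Yt_1,\Yt_2)-I(\Xv;\Yt_2),
\end{align*}
and taking $n\to\infty$ with $\epsilon,\delta_n\to 0$ produces the required $R_1$ bound with $P_{\Xv}$ the marginal of $\Xv_Q$; the $R_2$ bound is symmetric.

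I expect the most delicate step to be the last one, namely absorbing the time-sharing index $Q$ into the input distribution. The quantity $I(\Xv;\Yt_1|\Yt_2)$ is a difference of two mutual-information terms and is not in general concave in $P_\Xv$, so a direct Jensen-type argument does not close the gap. What rescues the proof is precisely the identical-per-letter structure of the surrogate channel: the induced Markov chain $Q\to\Xv\to(\Yt_1,\Yt_2)$ makes $I(Q;\Yt_1|\Xv,\Yt_2)$ vanish, so $Q$ may be dropped from the conditioning without any slack.
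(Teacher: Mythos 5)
Your proposal is correct and follows essentially the same route as the paper's proof: combine the equivocation constraint with Fano's inequality to reach $I(W_1;\Yt_1^n|\Yt_2^n)$, apply data processing along $W_1\to\Xv^n\to(\Yt_1^n,\Yt_2^n)$, single-letterize, and invoke Lemma~\ref{lem:sam} to pass between channels with the same marginals (the paper swaps $Y_k$ for $\Yt_k$ at the end rather than at the outset, which is immaterial). Your explicit time-sharing argument with $Q$ and the observation that $I(Q;\Yt_1|\Xv,\Yt_2)=0$ correctly fills in a single-letterization step that the paper leaves implicit.
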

\begin{proof}
See the Appendix~\ref{app:sec4}.
\end{proof}
\begin{remark}
The outer bound (\ref{eq:Sato}) follows by evaluating the secrecy level at each
user end in an individual manner, while by letting the users decode their
messages in a \emph{cooperative} manner. In this sense, we refer to this bound
as ``Sato-type'' outer bound.
\end{remark}

For example, we consider the confidential message $W_1$ that is destined for
user 1 (corresponding to $\Yt_1$) and eavesdropped by user 2 (corresponding to
$\Yt_2$). We assume that a genie gives user 1 the signal $\Yt_2$ as the side
information for decoding $W_1$. Note that the eavesdropped signal $\Yt_2$ at
user 2 is always a degraded version of the entire received signal
$(\Yt_1,\Yt_2)$. This permits the use of the wiretap channel result of
\cite{Wyner:BSTJ:75}.

\begin{remark}
Although Theorem~\ref{thm:out1} is based on a \emph{degraded} argument, the
outer bound (\ref{eq:Sato}) can be applied to \emph{general} broadcast channels
with confidential messages.
\end{remark}

\subsection{Sato-Type Outer Bound for the MGBC-CM}

For the Gaussian BC, the family  $\Pc$ is the set of channels
\begin{align}
\yt_1&=\hv^{H} \xv+\zt_1 \notag\\
\yt_2&=\gv^{H} \xv+\zt_2 \label{eq:miso-2}
\end{align}
where $\zt_1$ and $\zt_2$ correspond to arbitrarily correlated, zero-mean,
unit-variance, complex Gaussian random variables. Let $\rho$ denote the
covariance between $\Zt_1$ and $\Zt_2$, i.e,
$$\Cov\bigl(\Zt_1,\Zt_2\bigr)=\rho \quad \text{and} \quad |\rho|^2\le1.$$
Now, the rate region ${\Rc}_{\rm O}\bigl(P_{\Yt_1,\Yt_2|\Xv}, P_{\Xv}\bigr)$ is
a function of the noise covariance $\rho$ and the input covariance matrix
$K_{\Xv}$. We consider a computable Sato-type outer bound for the MGBC-CM in
the following lemma.
\begin{lemma} \label{lem:outG}
Let ${\Rc}_{\rm O}^{\rm MG}(\rho, K_{\Xv})$ denote the union of all rate pairs
$(R_1,R_2)$ satisfying
\begin{align}
&            & 0 \le R_1&\le f_1(\rho,K_{\Xv}) &\\
& \text{and} & 0 \le R_2&\le f_2(\rho,K_{\Xv})&
\end{align}
where
\begin{align}
& & f_1(\rho,K_{\Xv})& = \min_{\nu \in \mathbb{C}} \log_2
\frac{(\hv-\nu\gv)^{H} K_{\Xv}
(\hv-\nu\gv)+1+|\nu|^2-\nu^{*}\rho-\rho^{*}\nu}{(1-|\rho|^2)}
& \label{eq:f1-def} \\
& \text{and} & f_2(\rho,K_{\Xv})&= \min_{\mu \in \mathbb{C}} \log_2
\frac{(\gv-\mu\hv)^{H} K_{\Xv}
(\gv-\mu\hv)+1+|\mu|^2-\mu^{*}\rho-\rho^{*}\mu}{(1-|\rho|^2)}.
&\label{eq:f2-def}
\end{align}
For the MGBC-CM, the secrecy capacity region $\Cc^{\rm MG}_{s}$ satisfies
\begin{align}
\Cc^{\rm MG}_{s} \subseteq \bigcup_{\tr(K_{\Xv})\le P} {\Rc}_{\rm
O}(\rho,K_{\Xv}) \label{eq:Sato2}
\end{align}
for any $0\le|\rho|\le1$.
\end{lemma}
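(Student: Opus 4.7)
\textbf{Proof plan for Lemma~\ref{lem:outG}.}
The strategy is to specialize the general Sato-type outer bound of Theorem~\ref{thm:out1} to the Gaussian broadcast family in (\ref{eq:miso-2}) parameterized by a single noise correlation $\rho$, and then squeeze the resulting mutual-information differences into the explicit scalar form on the right-hand sides of (\ref{eq:f1-def})--(\ref{eq:f2-def}).

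First I fix $\rho$ with $|\rho|\le 1$ and consider the ``dressed'' channel $\tilde Y_k=\hv^H\Xv+\tilde Z_1$, $\tilde Y_2=\gv^H\Xv+\tilde Z_2$, where $(\tilde Z_1,\tilde Z_2)$ are jointly circularly-symmetric complex Gaussian with zero mean, unit variances, and $\Cov(\tilde Z_1,\tilde Z_2)=\rho$. This channel lies in $\Pc$, so by Lemma~\ref{lem:sam} its secrecy capacity region coincides with $\Cc^{\rm MG}_{s}$. Invoking Theorem~\ref{thm:out1} yields, for every input distribution $P_{\Xv}$ with $\tr(K_{\Xv})\le P$, the bounds
\begin{align*}
R_1&\le I(\Xv;\Yt_1,\Yt_2)-I(\Xv;\Yt_2)=h(\Yt_1\,|\,\Yt_2)-h(\Yt_1\,|\,\Yt_2,\Xv),\\
R_2&\le I(\Xv;\Yt_1,\Yt_2)-I(\Xv;\Yt_1)=h(\Yt_2\,|\,\Yt_1)-h(\Yt_2\,|\,\Yt_1,\Xv).
\end{align*}
The subtracted conditional entropies are pure noise terms: $h(\Yt_1\,|\,\Yt_2,\Xv)=h(\Zt_1\,|\,\Zt_2)=\log_2\bigl(\pi e(1-|\rho|^2)\bigr)$ and analogously for the second bound.

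Next I will upper bound $h(\Yt_1\,|\,\Yt_2)$. The key move is the MMSE-style trick: for any scalar $\nu\in\mathbb C$,
\begin{align*}
h(\Yt_1\,|\,\Yt_2)=h(\Yt_1-\nu\Yt_2\,|\,\Yt_2)\le h(\Yt_1-\nu\Yt_2),
\end{align*}
because conditioning cannot increase differential entropy. Writing $\Yt_1-\nu\Yt_2=(\hv-\nu\gv)^H\Xv+(\Zt_1-\nu\Zt_2)$, assuming without loss of generality that $E[\Xv]=\mathbf{0}$ (means carry no information and can be subtracted), the variance of this complex scalar equals
\begin{align*}
(\hv-\nu\gv)^H K_{\Xv}(\hv-\nu\gv)+1+|\nu|^2-\nu^{*}\rho-\rho^{*}\nu.
\end{align*}
The maximum-entropy inequality for complex scalars with a given second moment then yields $h(\Yt_1-\nu\Yt_2)\le \log_2\pi e\cdot[\cdots]$. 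Subtracting the noise term and recognizing that $\nu$ was arbitrary, I obtain $R_1\le f_1(\rho,K_{\Xv})$ after taking the minimum over $\nu$; the bound $R_2\le f_2(\rho,K_{\Xv})$ follows by symmetry upon swapping the roles of $(\hv,\Yt_1)$ and $(\gv,\Yt_2)$ and introducing $\mu$. Taking the union over admissible $K_{\Xv}$ and noting that the bound holds for every $\rho$ gives (\ref{eq:Sato2}).

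I expect the main obstacle to be purely bookkeeping: verifying the complex Gaussian entropy formula $\log_2\pi e(1-|\rho|^2)$ carefully, and justifying the zero-mean assumption so that ``variance'' and ``second moment'' agree in the maximum-entropy step. The minimizations over $\nu$ and $\mu$ are not computed explicitly (the lemma states the bound with a $\min$); they arise for free because the preceding inequality holds for every choice of the auxiliary parameter, so no separate argument is required to identify the MMSE coefficient.
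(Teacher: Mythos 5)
Your proposal is correct and follows essentially the same route as the paper: specialize Theorem~\ref{thm:out1} to the correlated-noise channel in $\Pc$, reduce each bound to $h(\Yt_1|\Yt_2)$ minus the Gaussian noise entropy $\log_2\bigl((2\pi e)(1-|\rho|^2)\bigr)$, and apply the conditioning trick $h(\Yt_1|\Yt_2)\le h(\Yt_1-\nu\Yt_2)$ together with the maximum-entropy bound before minimizing over $\nu$ (and $\mu$ by symmetry). The only piece of the paper's proof you omit is the verification that a zero-mean Gaussian $\Xv$ with covariance $K_{\Xv}$ attains $f_1(\rho,K_{\Xv})$ with equality via the LMMSE choice $\nu_{\rm o}=\Cov[\Yt_1,\Yt_2]/\Var[\Yt_2]$; this is not needed for the stated containment, so your argument stands as a complete proof of the lemma.
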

\begin{proof}
We provide the proof in Appendix~\ref{app:sec4}.
\end{proof}

\subsection{Converse Proof of Theorem~\ref{thm:GBC}}


In this subsection, we prove that the Sato-type outer bound of
Lemma~\ref{lem:outG} coincides with the secrecy capacity region $\Cc^{\rm
MG}_{s}$ by properly choosing the parameter $\rho$.

\subsubsection{Choosing the parameter $\rho$}

Note that Lemma~\ref{lem:outG} is true for any $\rho$ such that $0\le |\rho|\le
1$. In particular, we consider
\begin{align}
\rho_{\rm o}\triangleq \frac{\gv^{H}\ev_1}{\hv^{H}\ev_1}. \label{eq:def-rhoo}
\end{align}
The definitions of $\lambda_1$ and $\ev_1$ in (\ref{eq:eig-def1}) imply that
\begin{align}
|\hv^{H}\ev_1|^2- \lambda_1|\gv^{H}\ev_1|^2&= \frac{\lambda_1-1}{P}.
\label{eq:hhgg}
\end{align}
Since $\lambda_1\ge 1$ (see Lemma~\ref{lem:geig}), we obtain
\begin{align}
\left|\frac{\gv^{H}\ev_1}{\hv^{H}\ev_1}\right| \le 1.
\end{align}
Hence, we can choose $\rho=\rho_{\rm o}$ in Lemma~\ref{lem:outG}.

\subsubsection{Determining the relationship between $K_{\Xv}$ and $\alpha$}

We observe that the rate region ${\Rc}_{\rm O}^{\rm MG}(\rho_{\rm o}, K_{\Xv})$
defined in Lemma~\ref{lem:outG} is a function of the input covariance matrix
$K_{\Xv}$, while the rate region $\Rc^{\rm MG}(\alpha)$ defined in
Theorem~\ref{thm:GBC} is a function of $\alpha$. In order to prove the main
result, we build the relationship between $K_{\Xv}$ and $\alpha$ in the
following lemma.

\begin{lemma} \label{lem:Ka}
For any input covariance matrix $K_{\Xv}$ with $\tr(K_{\Xv})\le P$, there
exists a $\alpha\in[0,1]$ such that $L(K_{\Xv},\alpha)=0$, where
\begin{align}
L(K_{\Xv},\alpha)=[\hv-\rho_{\rm o}\gamma_1(\alpha)\gv]^{H} (K_{\Xv}-\alpha P
\ev_1\ev_1^{H})[\hv-\rho_{\rm o}\gamma_1(\alpha)\gv].
\end{align}
\end{lemma}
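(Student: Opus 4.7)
The plan is to apply the intermediate value theorem to the function $\alpha\mapsto L(K_{\Xv},\alpha)$ on $[0,1]$. Since $\gamma_1(\alpha)=(1+\alpha P|\hv^{H}\ev_1|^2)/(1+\alpha P|\gv^{H}\ev_1|^2)$ is a rational function of $\alpha$ with strictly positive denominator, $L(K_{\Xv},\alpha)$ is continuous on $[0,1]$, so it suffices to verify $L(K_{\Xv},0)\ge 0$ and $L(K_{\Xv},1)\le 0$.

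The first endpoint is immediate. At $\alpha=0$ we have $\gamma_1(0)=1$ and the subtracted term $\alpha P\ev_1\ev_1^{H}$ vanishes, so $L(K_{\Xv},0)=(\hv-\rho_{\rm o}\gv)^{H}K_{\Xv}(\hv-\rho_{\rm o}\gv)\ge 0$ because $K_{\Xv}\succeq 0$.

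The second endpoint is the crux and hinges on the algebraic fact that the vector $\hv-\rho_{\rm o}\lambda_1\gv$ is a scalar multiple of $\ev_1$. I would establish this by expanding the generalized eigenvalue relation (\ref{eq:eig-def1}), which yields
$$(\lambda_1-1)\ev_1=P\bigl[(\hv^{H}\ev_1)\hv-\lambda_1(\gv^{H}\ev_1)\gv\bigr]=P(\hv^{H}\ev_1)\bigl(\hv-\lambda_1\rho_{\rm o}\gv\bigr),$$
where the second equality uses the definition $\rho_{\rm o}=\gv^{H}\ev_1/\hv^{H}\ev_1$ from (\ref{eq:def-rhoo}). Thus, writing $c_1\triangleq(\lambda_1-1)/(P\,\hv^{H}\ev_1)$, we obtain $\hv-\lambda_1\rho_{\rm o}\gv=c_1\ev_1$, and since $\gamma_1(1)=\lambda_1$ by (\ref{eq:cal-g1}), substituting into the definition of $L$ gives
$$L(K_{\Xv},1)=|c_1|^2\,\ev_1^{H}(K_{\Xv}-P\ev_1\ev_1^{H})\ev_1=|c_1|^2\bigl(\ev_1^{H}K_{\Xv}\ev_1-P\bigr).$$
The trace constraint together with $K_{\Xv}\succeq 0$ then forces $\ev_1^{H}K_{\Xv}\ev_1\le\tr(K_{\Xv})\le P$, so $L(K_{\Xv},1)\le 0$. (The degenerate case $\lambda_1=1$ gives $c_1=0$ and hence $L(K_{\Xv},1)=0$ outright.)

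With both sign conditions, the intermediate value theorem immediately supplies some $\alpha\in[0,1]$ with $L(K_{\Xv},\alpha)=0$. The only nontrivial step is the identity $\hv-\lambda_1\rho_{\rm o}\gv\parallel\ev_1$; without this, $L(K_{\Xv},1)$ could easily be positive for, say, a rank-one $K_{\Xv}$ aligned with a direction far from $\ev_1$, and the two-endpoint IVT argument would fail to close.
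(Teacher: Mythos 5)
Your proposal is correct and follows essentially the same route as the paper: continuity plus the intermediate value theorem on $[0,1]$, with $L(K_{\Xv},0)\ge 0$ from positive semidefiniteness and $L(K_{\Xv},1)\le 0$ from the key identity $\hv-\rho_{\rm o}\lambda_1\gv=\frac{(\lambda_1-1)}{P\hv^{H}\ev_1}\ev_1$ derived from (\ref{eq:eig-def1}). Your handling of the $\alpha=1$ endpoint is in fact slightly cleaner than the paper's (which first bounds $\vv^{H}K_{\Xv}\vv\le\tr(K_{\Xv})|\vv|^2$ before invoking the parallelism), but the essential ideas are identical.
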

\begin{proof}
We provide the proof in Appendix~\ref{app:sec5}.
\end{proof}
Based on the function $L(\cdot)$, we define the subset of input covariance
matrices in terms of $\alpha$ as follows:
\begin{align}
\Lc(\alpha)=\{K_{\Xv}:~L(K_{\Xv},\alpha)=0\}.
\end{align}
Moreover, Lemma~\ref{lem:Ka} implies that
\begin{align}
\bigcup_{0\le\alpha\le 1}\Lc(\alpha)=\{K_{\Xv}:~\tr(K_{\Xv})\le P\}.
\label{eq:lset}
\end{align}

\subsubsection{Bound on $f_1(\rho_{\rm o},K_{\Xv})$}
Now, we prove that if $K_{\Xv}\in \Lc(\alpha)$, then
\begin{align}
f_1(\rho_{\rm o},K_{\Xv}) \le \log_2 \gamma_1(\alpha).
\end{align}

Let $\nu(\alpha)=\rho_{\rm o} \gamma_1(\alpha)$. For a given
$K_{\Xv}\in\Lc(\alpha)$, the definition (\ref{eq:f1-def}) implies that
\begin{align}
f_1(\rho_{\rm o},K_{\Xv}) & \le \log_2 \frac{
[\hv-\nu(\alpha)\gv]^{H}K_{\Xv}[\hv-\nu(\alpha)\gv]+1+|\nu(\alpha)|^2-\nu^{*}(\alpha)\rho_{\rm
o}-\rho_{\rm
o}^{*}\nu(\alpha)}{1-|\rho_{\rm o}|^2} \notag\\
& = \log_2 \frac{\alpha P \left|[\hv-\rho_{\rm o} \gamma_1(\alpha)\gv]^{H}
\ev_1\right|^2 +1 + |\rho_{\rm o}|^2 \gamma_1^2(\alpha) - 2|\rho_{\rm o}|^2
\gamma_1(\alpha)}{1-|\rho_{\rm o}|^2}. \label{eq:m0}
\end{align}
Based on the definition of $\rho_{\rm o}$ in (\ref{eq:def-rhoo}), we have
\begin{align}
\hv- \rho_{\rm o} \gamma_1(\alpha)\gv &=\hv-
\gamma_1(\alpha) \frac{\gv\gv^{H}\ev_1} {\hv^{H}\ev_1}\notag\\
&=\left[\frac{\hv\hv^{H}- \gamma_1(\alpha)
\gv\gv^{H}}{\hv^{H}\ev_1}\right]\ev_1. \label{eq:apx-m13}
\end{align}
Hence,
\begin{align}
\left|[\hv- \rho_{\rm o} \gamma_1(\alpha)\gv]^{H}\ev_1\right|^2
&=\frac{\bigl[|\hv^{H}\ev_1|^2- \gamma_1(\alpha)
|\gv^{H}\ev_1|^2\bigr]^2}{|\hv^{H}\ev_1|^2} \notag\\
&=\left[|\hv^{H}\ev_1|^2- \gamma_1(\alpha) |\gv^{H}\ev_1|^2\right] [1-\gamma_1(\alpha) |\rho_{\rm o}|^2]\notag\\
&=\left[\frac{\gamma_1(\alpha)-1}{\alpha P} \right] [1-\gamma_1(\alpha)
|\rho_{\rm o}|^2]  \label{eq:apx-m1}
\end{align}
where the last step of (\ref{eq:apx-m1}) follows from the definition of
$\gamma_1(\alpha)$ in (\ref{eq:def-gm-1}). Substituting (\ref{eq:apx-m1}) into
(\ref{eq:m0}), we obtain
\begin{align}
f_1(\rho_{\rm o},K_{\Xv})  &\le  \log_2 \frac{[\gamma_1(\alpha)-1] [1-
\gamma_1(\alpha)|\rho_{\rm o}|^2] +1 + |\rho_{\rm o}|^2 \gamma_1^2(\alpha) -
2|\rho_{\rm o}|^2
\gamma_1(\alpha)}{1-|\rho_{\rm o}|^2}\notag\\
&= \log_2 \frac{\gamma_1(\alpha)-|\rho_{\rm o}|^2
\gamma_1(\alpha)}{1-|\rho_{\rm o}|^2}\notag\\
 &=  \log_2 \gamma_1(\alpha). \label{eq:m2}
\end{align}

\subsubsection{Bound on $f_2(\rho_{\rm o},K_{\Xv})$}

Here, we prove that if $K_{\Xv}\in \Lc(\alpha)$, then
\begin{align}
f_1(\rho_{\rm o},K_{\Xv}) \le \log_2 \gamma_2(\alpha)
\end{align}
where $\gamma_2(\alpha)$ is the largest generalized eigenvalue of the pencil
(\ref{eq:pencil2}). In fact, the smallest generalized eigenvalue of the pencil
(\ref{eq:pencil2}) is $\gamma_1(\alpha)/\lambda_1$. This result is described in
the following lemma.

\begin{lemma} \label{lem:small}
$\gamma_1(\alpha)/\lambda_1$ and $\ev_1(\alpha)$ are the smallest generalized
eigenvalue and the corresponding normalized eigenvector of the pencil
\begin{align}
\left(I+\frac{(1-\alpha) P}{1+\alpha P |\gv^{H}\ev_1|^2}\gv\gv^{H}, \;
I+\frac{(1-\alpha) P}{1+ \alpha P |\hv^{H}\ev_1|^2}\hv \hv^{H}\right)
\end{align}
where $\lambda_1$ and $\ev_1(\alpha)$ are defined in (\ref{eq:eig-def1}), and
$\gamma_1(\alpha)$ is defined in (\ref{eq:def-gm-1}).
\end{lemma}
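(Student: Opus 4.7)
The plan is to first verify by direct substitution that $\ev_1$ is a generalized eigenvector of the stated pencil with eigenvalue $\gamma_1(\alpha)/\lambda_1$, and then argue via a spectral-counting argument that this eigenvalue is the smallest of the full spectrum.

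For the direct verification, denote $A(\alpha)=I+\tau_g\gv\gv^{H}$ and $B(\alpha)=I+\tau_h\hv\hv^{H}$, where $\tau_g=(1-\alpha)P/(1+\alpha P|\gv^{H}\ev_1|^2)$ and $\tau_h=(1-\alpha)P/(1+\alpha P|\hv^{H}\ev_1|^2)$. The goal is to show $A(\alpha)\ev_1=(\gamma_1(\alpha)/\lambda_1)\,B(\alpha)\ev_1$. I would expand both sides as linear combinations of $\ev_1$, $(\gv^{H}\ev_1)\gv$, and $(\hv^{H}\ev_1)\hv$, then substitute the rearranged form $P(\hv^{H}\ev_1)\hv-\lambda_1 P(\gv^{H}\ev_1)\gv=(\lambda_1-1)\ev_1$ of the defining identity (\ref{eq:eig-def1}) to eliminate $(\hv^{H}\ev_1)\hv$. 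Combined with the relation $\lambda_1=(1+P|\hv^{H}\ev_1|^2)/(1+P|\gv^{H}\ev_1|^2)$ from (\ref{eq:cal-g1}) and the definition (\ref{eq:def-gm-1}) of $\gamma_1(\alpha)$, the equation collapses to an algebraic identity in $\alpha$, $P$, $|\hv^{H}\ev_1|^2$, and $|\gv^{H}\ev_1|^2$.

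For the minimality claim, I would exploit the fact that $A(\alpha)$ and $B(\alpha)$ are both rank-one perturbations of the identity supported in $\mathrm{span}\{\hv,\gv\}$. Every vector orthogonal to both $\hv$ and $\gv$ is therefore a generalized eigenvector with eigenvalue $1$, accounting for $t-2$ of the $t$ generalized eigenvalues. The remaining two eigenvectors lie in $\mathrm{span}\{\hv,\gv\}$: by the definition preceding (\ref{eq:pencil2}) the larger of the two is $\gamma_2(\alpha)$, and $\ev_1$ (which itself belongs to $\mathrm{span}\{\hv,\gv\}$ whenever $\lambda_1>1$, via the same rearrangement of (\ref{eq:eig-def1})) supplies the other. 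Finally, Corollary~\ref{cor:maxr1} gives $\gamma_1(\alpha)\le\gamma_1(1)=\lambda_1$, so $\gamma_1(\alpha)/\lambda_1\le 1\le\gamma_2(\alpha)$; hence $\gamma_1(\alpha)/\lambda_1$ is indeed the smallest generalized eigenvalue. The degenerate cases ($\alpha=1$, or $\hv$ parallel to $\gv$) reduce to $(I,I)$ or to $\lambda_1=1$ and are handled separately.

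The main obstacle is the algebraic verification in the first step: it requires careful bookkeeping of the scalar factors $\tau_g$, $\tau_h$, $\lambda_1$, and $\gamma_1(\alpha)$, and relies critically on the linear relation among $\hv$, $\gv$, and $\ev_1$ implied by (\ref{eq:eig-def1}). Once the rank-one structure of the pencil is exploited, the spectral-counting step is routine, with the monotonicity $\gamma_1(\alpha)\le\lambda_1$ from Corollary~\ref{cor:maxr1} doing all the work.
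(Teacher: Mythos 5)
Your proof is correct, but it takes a genuinely different route from the paper's. The paper argues entirely variationally: starting from the Rayleigh-quotient characterization $\lambda_1=\max_{\cv}\,\cv^{H}(I+P\hv\hv^{H})\cv\,/\,\cv^{H}(I+P\gv\gv^{H})\cv$ (attained at $\ev_1$), it inverts the quotient, rewrites the resulting minimization so that numerator and denominator become exactly the two matrices of the pencil (\ref{eq:pencil2}) up to the scalings $1+\alpha P|\gv^{H}\ev_1|^2$ and $1+\alpha P|\hv^{H}\ev_1|^2$, reads off that the minimum equals $\gamma_1(\alpha)/\lambda_1$ and is still attained at $\ev_1$, and then invokes Theorem~\ref{thm:Q} to identify this minimum with the smallest generalized eigenvalue. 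Minimality is thus automatic and no inspection of the rest of the spectrum is needed. You instead verify the eigen-equation $A(\alpha)\ev_1=(\gamma_1(\alpha)/\lambda_1)B(\alpha)\ev_1$ by direct substitution of the rearranged identity (\ref{eq:eig-def1}) (this computation does close: the $\gv$-coefficients match because $\tau_g=\gamma_1(\alpha)\tau_h$, and the $\ev_1$-coefficients match via (\ref{eq:hhgg})), and then locate the eigenvalue within the full spectrum by the rank-two-perturbation structure. Your route is more constructive and exposes the whole spectrum ($t-2$ eigenvalues equal to $1$ plus the two in $\mathrm{span}\{\hv,\gv\}$), but it pays for this with extra bookkeeping: you need Corollary~\ref{cor:maxr1} to get $\gamma_1(\alpha)/\lambda_1\le 1$, you must treat the degenerate cases separately, and your identification of the second in-span eigenvalue with $\gamma_2(\alpha)$ (equivalently, the claim $\gamma_2(\alpha)\ge 1$) is asserted rather than proved --- it does hold, by applying the argument of Lemma~\ref{lem:geig} to the pencil (\ref{eq:pencil2}) with a unit vector in $\mathrm{span}\{\hv,\gv\}$ orthogonal to $\hv$, and you should add that one line to make the counting step airtight.
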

\begin{proof}
We provide the proof in Appendix~\ref{app:sec5}.
\end{proof}
Based on the property of generalized eigenvalues (see Appendix~\ref{app:geig}),
Lemma~\ref{lem:small} implies that
\begin{align}
&  &\ev_1^{H}\left[I+\frac{(1-\alpha)
P}{1+\alpha P
|\gv^{H}\ev_1|^2}\gv\gv^{H}\right]\cv_2(\alpha) &=0 &\label{eq:e1ooc2}\\
&\text{and}  &\ev_1^{H}\left[I+\frac{(1-\alpha) P}{1+\alpha P
|\hv^{H}\ev_1|^2}\hv\hv^{H}\right]\cv_2(\alpha) &=0 &
\end{align}
where $\cv_2(\alpha)$ is the normalized eigenvector of the pencil
(\ref{eq:pencil2}) corresponding to $\gamma_2(\alpha)$. Hence,
\begin{align}
&  &\frac{(1-\alpha) P}{1+\alpha P
|\gv^{H}\ev_1|^2}  \ev_1^{H}\gv \gv^{H}\cv_2(\alpha) &=-\ev_1^{H}\cv_2(\alpha) & \label{eq:e1c2}\\
&\text{and}  &\frac{(1-\alpha) P}{1+\alpha P |\hv^{H}\ev_1|^2} \ev_1^{H}\hv
\hv^{H}\cv_2(\alpha) &=-\ev_1^{H}\cv_2(\alpha). &
\end{align}
By combining the definitions of $\rho_{\rm o}$ in (\ref{eq:def-rhoo}) and
$\gamma_1(\alpha)$ in (\ref{eq:def-gm-1}), we obtain
\begin{align} \rho_{\rm o}
=\frac{\gv^{H}\ev_1}{\hv^{H}\ev_1}
=\frac{1}{\gamma_1(\alpha)}\left[\frac{\hv^{H}\cv_2(\alpha)}{\gv^{H}\cv_2(\alpha)}\right]^{*}.
\label{eq:def-rho2}
\end{align}

We now establish the relationship between $\gamma_1(\alpha)$ and
$\gamma_2(\alpha)$  based on (\ref{eq:def-rho2}) in the following lemma.
\begin{lemma} \label{lem:orth}
For any $\alpha \in [0,1]$,
\begin{align}
&          &  \frac{\gv-\rho_{\rm o}^{*} \gamma_2(\alpha)\hv} {|\gv-\rho_{\rm
o}^{*}\gamma_2(\alpha)\hv|^2}&=\cv_2(\alpha)&  \label{eq:c2}\\
&\text{and}& [\hv- \rho_{\rm o} \gamma_1(\alpha)\gv]^{H}[\gv- \rho_{\rm o}^{*}
\gamma_2(\alpha)\hv]&=0 &
\end{align}
where $\gamma_1(\alpha)$ is defined in (\ref{eq:def-gm-1}), and
$\gamma_2(\alpha)$ and $\cv_2(\alpha)$ are the largest generalized eigenvalue
and the corresponding normalized eigenvector of the pencil (\ref{eq:pencil2}).
\end{lemma}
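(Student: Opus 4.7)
The plan is to derive both claims directly from the generalized eigenvalue equation satisfied by $\cv_2(\alpha)$, using the key identity (\ref{eq:def-rho2}) already in hand, which says $\hv^H\cv_2(\alpha) = \rho_{\rm o}^*\,\gamma_1(\alpha)\,\gv^H\cv_2(\alpha)$.

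First I would write the pencil (\ref{eq:pencil2}) as $(I + a\,\gv\gv^H)\cv_2(\alpha) = \gamma_2(\alpha)(I + b\,\hv\hv^H)\cv_2(\alpha)$, where
\begin{align*}
a \;=\; \frac{(1-\alpha)P}{1+\alpha P|\gv^H\ev_1|^2}, \qquad b \;=\; \frac{(1-\alpha)P}{1+\alpha P|\hv^H\ev_1|^2}.
\end{align*}
Moving the identity terms to one side, the eigenvalue equation gives
\begin{align*}
(\gamma_2(\alpha)-1)\,\cv_2(\alpha) \;=\; a\,(\gv^H\cv_2(\alpha))\,\gv \;-\; \gamma_2(\alpha)\,b\,(\hv^H\cv_2(\alpha))\,\hv,
\end{align*}
so $\cv_2(\alpha)$ lies in the span of $\{\gv,\hv\}$. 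I would then use (\ref{eq:def-rho2}) to replace $\hv^H\cv_2(\alpha)$ by $\rho_{\rm o}^*\gamma_1(\alpha)\,\gv^H\cv_2(\alpha)$; the definition of $\gamma_1(\alpha)$ in (\ref{eq:def-gm-1}) precisely cancels the ratio $b/a$, since $\gamma_1(\alpha) = (1+\alpha P|\hv^H\ev_1|^2)/(1+\alpha P|\gv^H\ev_1|^2) = a/b$. Therefore the second term in the bracket becomes $a\,\rho_{\rm o}^*\gamma_2(\alpha)\,(\gv^H\cv_2(\alpha))\,\hv$, and we can factor $a\,(\gv^H\cv_2(\alpha))$ out to obtain
\begin{align*}
\cv_2(\alpha) \;=\; \frac{a\,(\gv^H\cv_2(\alpha))}{\gamma_2(\alpha)-1}\bigl[\gv-\rho_{\rm o}^*\gamma_2(\alpha)\hv\bigr].
\end{align*}
Imposing the normalization $\cv_2^H(\alpha)\cv_2(\alpha)=1$ then fixes the scalar prefactor and yields the first identity of the lemma.

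For the second identity, I would exploit the proportionality $\gv-\rho_{\rm o}^*\gamma_2(\alpha)\hv \propto \cv_2(\alpha)$ just established. The inner product then reduces (up to a nonzero scalar) to $[\hv-\rho_{\rm o}\gamma_1(\alpha)\gv]^H \cv_2(\alpha)$. Since $\gamma_1(\alpha)$ is real, this equals $\hv^H\cv_2(\alpha) - \rho_{\rm o}^*\gamma_1(\alpha)\,\gv^H\cv_2(\alpha)$, which vanishes by (\ref{eq:def-rho2}).

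The only nontrivial step is the algebraic manipulation in which the ratio $b/a$ is recognized as $1/\gamma_1(\alpha)$ so that the identity (\ref{eq:def-rho2}) collapses the two terms of the eigenvalue equation into a common scalar multiple of $\gv-\rho_{\rm o}^*\gamma_2(\alpha)\hv$; once that is in place, the orthogonality relation is essentially automatic. A minor care point is that the normalization in the statement should be read as a scaling identity — $\cv_2(\alpha)$ is a normalized generalized eigenvector of (\ref{eq:pencil2}), so the scalar prefactor in the proof must match the norm of $\gv-\rho_{\rm o}^*\gamma_2(\alpha)\hv$ and the phase of $\gv^H\cv_2(\alpha)$.
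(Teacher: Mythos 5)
Your proof is correct, and for the first identity it is essentially the paper's own argument: both start from the relation $\rho_{\rm o}^{*}=\hv^{H}\cv_2(\alpha)/[\gamma_1(\alpha)\,\gv^{H}\cv_2(\alpha)]$ of (\ref{eq:def-rho2}) together with the eigenvalue equation (\ref{eq:g2c2}), and your observation that $\gamma_1(\alpha)=a/b$ is precisely the step by which the paper converts the pencil equation into $[\gamma_1(\alpha)\gv\gv^{H}-\gamma_2(\alpha)\hv\hv^{H}]\cv_2(\alpha)\propto \cv_2(\alpha)$ before reading off (\ref{eq:gh-c22}). Where you genuinely diverge is the second identity. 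The paper proves $[\hv-\rho_{\rm o}\gamma_1(\alpha)\gv]^{H}\cv_2(\alpha)=0$ from scratch: it uses (\ref{eq:eig-def1}) and (\ref{eq:hhgg}) to rewrite $\hv-\rho_{\rm o}\gamma_1(\alpha)\gv$ as a scalar multiple of $\bigl[I+\frac{(1-\alpha)P}{1+\alpha P|\gv^{H}\ev_1|^2}\gv\gv^{H}\bigr]\ev_1$ and then invokes the $B$-orthogonality relation (\ref{eq:e1ooc2}) between the extremal eigenvectors $\ev_1$ and $\cv_2(\alpha)$. You instead note that once $\gv-\rho_{\rm o}^{*}\gamma_2(\alpha)\hv\propto\cv_2(\alpha)$ is established, the orthogonality collapses to $\hv^{H}\cv_2(\alpha)-\rho_{\rm o}^{*}\gamma_1(\alpha)\gv^{H}\cv_2(\alpha)=0$, which is (\ref{eq:def-rho2}) again. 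Your route is shorter and makes the logical dependence on (\ref{eq:def-rho2}) explicit; the paper's buys an independent verification that does not presuppose the first identity (and remains informative in the degenerate case $\gamma_2(\alpha)=1$, where your reduction ``up to a nonzero scalar'' needs the trivial remark that the conclusion holds anyway when $\gv-\rho_{\rm o}^{*}\gamma_2(\alpha)\hv=0$). Your closing caveat about normalization is well taken: as literally written, $v/|v|^{2}=\cv_2(\alpha)$ forces the proportionality constant $a\,\gv^{H}\cv_2(\alpha)/[\gamma_2(\alpha)-1]$ to equal $1$, which is a choice of scale and phase for $\cv_2(\alpha)$ rather than a consequence of $\cv_2^{H}(\alpha)\cv_2(\alpha)=1$ alone; the paper's proof carries the same implicit convention.
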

\begin{proof}
We provide the proof in Appendix~\ref{app:sec5}.
\end{proof}
Let $\cv_1(\alpha)$ denote the normalized vector of $\hv-\rho_{\rm o}
\gamma_1(\alpha)\gv$, i.e.,
\begin{align}
\cv_1(\alpha)\triangleq\frac{\hv-\rho_{\rm o}
\gamma_1(\alpha)\gv}{|\hv-\rho_{\rm o} \gamma_1(\alpha)\gv|^2}.
\end{align}
Note that Lemma~\ref{lem:orth} implies that $\cv_1(\alpha)$ and $\cv_2(\alpha)$
are orthogonal. Moreover, since the input covariance matrix $K_{\Xv}$ is
Hermitian and positive semidefinite, we obtain
\begin{align}
\cv_1^{H}(\alpha)K_{\Xv}\cv_1(\alpha)+\cv_2^{H}(\alpha)K_{\Xv}\cv_2(\alpha) \le
\tr(K_{\Xv})=P.
\end{align}
Hence, for a given $K_{\Xv}\in\Lc(\alpha)$, we have
\begin{align}
\cv_2^{H}(\alpha)K_{\Xv}\cv_2(\alpha) &\le P -\alpha
P|\cv_1^{H}(\alpha)\ev_1|^2 \notag\\
&= (1-\alpha)P +\alpha P|\cv_2^{H}(\alpha)\ev_1|^2. \label{eq:pc2}
\end{align}
Inserting (\ref{eq:c2}) into (\ref{eq:pc2}), we obtain
\begin{align}
[\gv-\rho_{\rm o}^{*}\gamma_2(\alpha)\hv]^{H}K_{\Xv} [\gv-\rho_{\rm
o}^{*}\gamma_2(\alpha)\hv] \le (1-\alpha)P\zeta(\alpha) +\alpha P \eta(\alpha)
\end{align}
where
\begin{align}
&            & \zeta(\alpha)&\triangleq|\gv-\rho_{\rm o}^{*}\gamma_2(\alpha)\hv)|^2& \label{eq:zta}\\
& \text{and} & \eta(\alpha) &\triangleq |\gv- \rho_{\rm o}^{*}
\gamma_2(\alpha)\hv|^2 |\ev_1^{H}\cv_2(\alpha)|^2. & \label{eq:eta}
\end{align}
In Appendix~\ref{app:sec5}, we prove the following equality
\begin{align}
(1-\alpha)P\zeta(\alpha) +\alpha P \eta(\alpha)=[\gamma_2(\alpha)-1][1-
\gamma_2(\alpha)|\rho_{\rm o}|^2]. \label{eq:fm6}
\end{align}

Next, we consider the bound on $f_2(\rho_{\rm o},K_{\Xv})$. Let
\begin{align*}
\mu(\alpha)=\rho_{\rm o}^{*} \gamma_2(\alpha).
\end{align*}
For a given $K_{\Xv}\in\Lc(\alpha)$, the definition (\ref{eq:f2-def}) implies
that
\begin{align}
f_2(\rho_{\rm o},K_{\Xv}) & \le \log_2 \frac{
[\gv-\mu(\alpha)\hv]^{H}K_{\Xv}[\gv-\mu(\alpha)\hv]+1+|\mu(\alpha)|^2-\mu^{*}(\alpha)\rho_{\rm
o}-\rho_{\rm
o}^{*}\mu(\alpha)}{1-|\rho_{\rm o}|^2} \notag\\
& \le \log_2 \frac{(1-\alpha)P\zeta(\alpha)+\alpha P\eta(\alpha) +1 +
|\rho_{\rm o}|^2 \gamma_2^2(\alpha) - 2|\rho_{\rm o}|^2
\gamma_2(\alpha)}{1-|\rho_{\rm o}|^2}. \label{eq:fm0}
\end{align}
Now, substituting (\ref{eq:fm6}) into (\ref{eq:fm0}), we obtain
\begin{align}
f_2(\rho_{\rm o},K_{\Xv}) & \le \log_2 \frac{[\gamma_2(\alpha)-1][1-
\gamma_2(\alpha)|\rho_{\rm o}|^2] +1 + |\rho_{\rm o}|^2 \gamma_2^2(\alpha) -
2|\rho_{\rm o}|^2 \gamma_2(\alpha)}{1-|\rho_{\rm o}|^2}\notag\\
&=\log_2 \gamma_2(\alpha). \label{eq:fm7}
\end{align}

Finally, Combining (\ref{eq:lset}), (\ref{eq:m2}) and (\ref{eq:fm7}), we have
the desired result:
\begin{align}
\bigcup_{\tr(K_{\Xv})\le P} {\Rc}_{\rm O}(\rho, K_{\Xv})\subseteq \bigcup_{0\le
\alpha \le 1} \Rc^{\rm MG}(\alpha).
\end{align}

\section{Numerical Examples} \label{sec:ex}

In this section, we study two numerical examples to illustrate the secrecy
capacity region of the MGBC-CM. For simplicity, we assume that the Gaussian BC
has real input and output alphabets and the channel attenuation vectors $\hv$
and $\gv$ are real too. Under this condition, all calculated rate values are
divided by $2$.

\begin{example}
\begin{figure}[t]
 \centerline{\includegraphics[width=0.6\linewidth,draft=false]{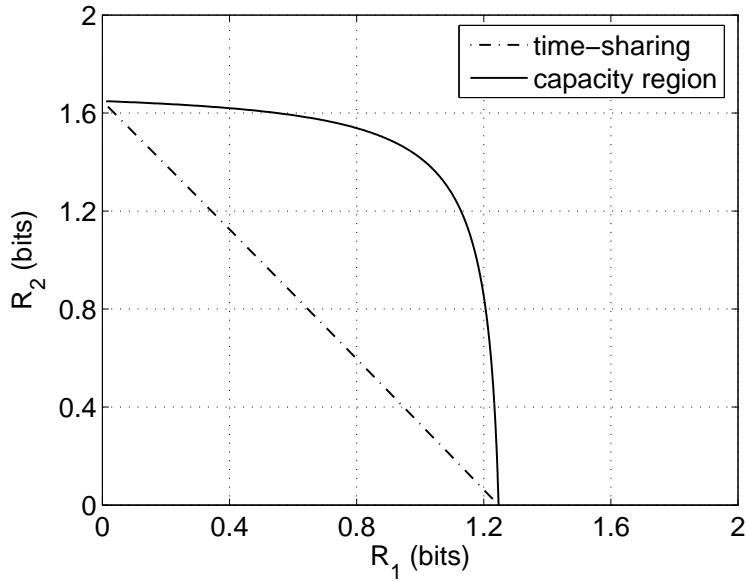}}
\caption{Comparison of the Sato-type outer bound and secrecy rate regions
achieved by time-sharing and simplified DPC schemes for the example MGBC-CM in
(\ref{eq:ex1})}
  \label{fig:sim1}
\end{figure}
In the first example, we consider the following MGBC-CM
\begin{align}
\left[\begin{matrix}y_1 \\ y_2 \end{matrix}\right] &= \left[\begin{matrix} 1.5
&0 \\ 1.801 & 0.871 \end{matrix}\right] \left[\begin{matrix}x_1 \\ x_2
\end{matrix}\right] +\left[\begin{matrix}z_1 \\ z_2
\end{matrix}\right]\label{eq:ex1}
\end{align}
where $\hv=[1.5, 0]^{T}$, $\gv=[1.801, 0.872]^{T}$, and the total power
constraint is set to $P=10$. Fig.~\ref{fig:sim1} illustrates the secrecy
capacity region for the channel (\ref{eq:ex1}). We observe that even though
each component of the attenuation vector $\hv$ (imposed on user 1) is strictly
less than the corresponding component of $\gv$ (imposed on user 2), both users
can achieve positive rates simultaneously under the information-theoretic
secrecy requirement.
\end{example}

\begin{example}
In the second example, we consider the MGBC-CM as follows
\begin{align}
\left[\begin{matrix}y_1 \\ y_2 \end{matrix}\right]&= \left[\begin{matrix} 1.414
& 1.414 \\ 0.4  &   1.959 \end{matrix}\right] \left[\begin{matrix}x_1 \\ x_2
\end{matrix}\right] +\left[\begin{matrix}z_1 \\ z_2
\end{matrix}\right]\label{eq:ex2}
\end{align}
where $\hv=[1.414, 1.414]^{T}$, $\gv=[0.4, 1.959]^{T}$, and the total power
$P=10$. The secrecy capacity region of the channel (\ref{eq:ex2}) is calculated
and depicted in Fig.~\ref{fig:sim2}.
\begin{figure}[t]
 \centerline{\includegraphics[width=0.6\linewidth,draft=false]{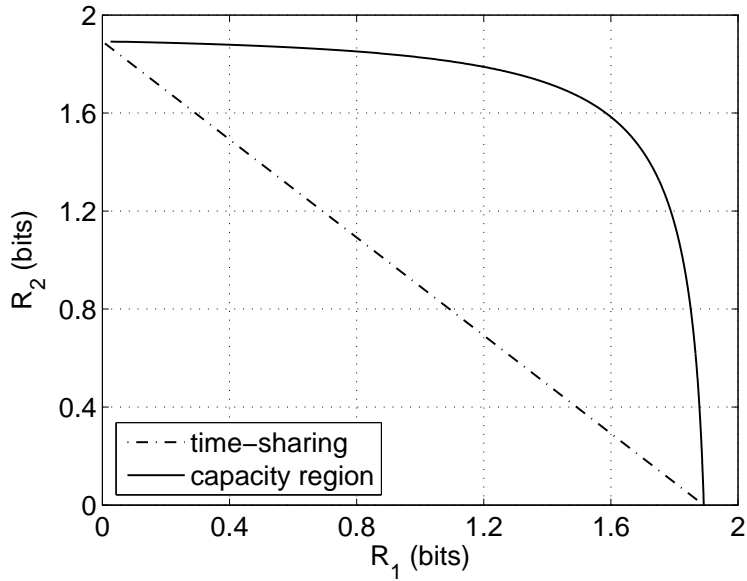}}
\caption{Comparison of the Sato-type outer bound and secrecy rate regions
achieved by time-sharing and simplified DPC schemes for the example MGBC-CM in
(\ref{eq:ex2})}
  \label{fig:sim2}
\end{figure}
\end{example}

Moreover, we compare the secrecy capacity region with the secrecy rate region
achieved by the time-sharing scheme (indicated by the dash-dot line). The
time-sharing refers to the scheme in which the transmitter sends the
confidential message $W_1$  with total power $P_1$ during a fraction $\tau_1$
of time, and sends the confidential message $W_2$ with total power $P_2$ during
a fraction $\tau_2$ of time, where
$$\tau_1+\tau_2=1 \quad  \text{and} \quad \tau_1 P_1+\tau_2 P_2=P.$$
Note that in each time fraction, the MGBC-CM reduces to a Gaussian MISO wiretap
channel. Using such time-sharing, the rate pair $\left(\frac{\tau_1}{2} \log_2
\lambda_1(P_1),  \frac{\tau_2}{2} \log_2 \lambda_2(P_2)\right)$ is achievable,
where $\lambda_1(P_1)$ and $\lambda_2(P_2)$ are the largest generalized
eigenvalues of the pencil $(I+P_1\hv \hv^{H}, I+P_1\gv \gv^{H})$ and the pencil
$(I+P_2\gv \gv^{H}, I+P_2\gv \gv^{H})$, respectively. Both Fig.~\ref{fig:sim1}
and Fig.~\ref{fig:sim2} demonstrate that the time-sharing scheme is strictly
suboptimal for providing the secrecy capacity region.

\section{Conclusion} \label{sec:con}

In this paper, we have investigated the secrecy capacity region of a generally
non-degraded Gaussian BC with confidential messages for two users, where the
transmitter has $t$ antennas and each user has a single antenna. For this
model, we have proposed a secret dirty-paper coding scheme and introduced a
computable Sato-type outer bound. Furthermore, we have proved that the boundary
of the secret dirty-paper coding rate region is consistent with the Sato-type
outer bound for the multiple-antenna Gaussian BC, and hence, we have obtained
the secrecy capacity region for the MGBC-CM.

Unlike the single-antenna Gaussian BC-CM case, in which only the superior user
can obtain confidential information at a positive secrecy rate, our result has
illustrated that both users can achieve strictly positive rates with
information-theoretic secrecy through a multiple-antenna Gaussian BC if
attenuation vectors imposed on user 1 and user 2 are linear independent.
Therefore, it becomes more practical and more attractive to achieve
information-theoretic secrecy in wireless networks by employing multiple
transmit-antennas at the physical layer.

%

\appendices

\section{The Generalized Eigenvalue and Rayleigh Quotient Problem}
\label{app:geig}

A generalized eigenvalue problem is to determine the nontrivial solutions of
the equation
\begin{align}
A\ev=\lambda B \ev \label{eq:geig-d}
\end{align}
where $A$ and $B$ are matrices and $\lambda$ is a scalar. The values of
$\lambda$ that satisfy (\ref{eq:geig-d}) are the generalized eigenvalues and
the corresponding vectors of $\ev$ are the generalized eigenvectors.

In particular, if $A$ is Hermitian and $B$ is Hermitian and positive definite,
then we have the following properties of $A\ev=\lambda B \ev$:
\begin{enumerate}
  \item The generalized eigenvalues $\lambda_i$ are real.
  \item The eigenvectors are ``$B$-orthogonal'', i.e.,
  \begin{align}
  \ev_i^{H} B \ev_j = 0 \quad \text{for}~i\neq j.
  \end{align}
  \item Similarly,
  \begin{align}
  \ev_i^{H} A \ev_j = \lambda_j \ev_i^{H} A \ev_j =0 \quad \text{for}~i\neq j.
  \end{align}
\end{enumerate}

Next, we describe the well-known {\it Rayleigh's quotient} \cite{Gilbert} as
follows.
\begin{theorem} \label{thm:Q} (see \cite{Gilbert})
Let $r(\cv)$ be the Rayleigh's quotient defined as
\begin{align}
r(\cv)\triangleq \frac{\cv^{H} A  \cv}{\cv^{H} B  \cv}.
\end{align}
where $A$ is Hermitian and $B$ is Hermitian and positive definite. The quotient
$R(\cv)$ is maximized by the eigenvector $\ev_{\max}$ corresponding to the
largest generalized eigenvalue $\lambda_{\max}$ of the pencil $(A,B)$:
\begin{align}
\max_{\cv} R(\cv)= \frac{\ev_{\max}^{H} A  \ev_{\max}}{\ev_{\max}^{H} B
\ev_{\max}}&=\lambda_{\max}
\end{align}
and $R(\cv)$ is minimized by the eigenvector $\ev_{\min}$ corresponding to the
smallest generalized eigenvalue $\lambda_{\min}$ of the pencil $(A,B)$:
\begin{align}
\min_{\cv} R(\cv)= \frac{\ev_{\min}^{H} A  \ev_{\min}}{\ev_{\min}^{H} B
\ev_{\min}}&=\lambda_{\min}.
\end{align}
\end{theorem}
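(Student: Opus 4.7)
The plan is to reduce the generalized Rayleigh quotient problem for the pencil $(A,B)$ to the standard Rayleigh quotient problem for a single Hermitian matrix, and then invoke the classical spectral theorem. The key device is a congruence transformation that uses positive-definiteness of $B$ to ``whiten'' the denominator.

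First, since $B$ is Hermitian and positive definite, I would take a Cholesky factorization $B = L L^{H}$ with $L$ invertible (the existence of $B^{1/2}$ would work equally well). Substituting $\mathbf{u} = L^{H} \mathbf{c}$, so that $\mathbf{c} = L^{-H} \mathbf{u}$, the denominator becomes $\mathbf{c}^{H} B \mathbf{c} = \mathbf{u}^{H}\mathbf{u}$ and the numerator becomes $\mathbf{c}^{H} A \mathbf{c} = \mathbf{u}^{H} M \mathbf{u}$, where $M \triangleq L^{-1} A L^{-H}$. Because $A$ is Hermitian, so is $M$. Hence
\begin{align*}
r(\mathbf{c}) \;=\; \frac{\mathbf{u}^{H} M \mathbf{u}}{\mathbf{u}^{H}\mathbf{u}},
\end{align*}
and as $\mathbf{c}$ ranges over all nonzero vectors so does $\mathbf{u}$.

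Next, I would invoke the standard spectral theorem for Hermitian matrices: $M$ has real eigenvalues $\mu_{1}\le \mu_{2}\le \cdots\le \mu_{n}$ with an orthonormal eigenbasis $\{\mathbf{q}_{i}\}$. Expanding $\mathbf{u} = \sum_{i} a_{i}\mathbf{q}_{i}$ gives $\mathbf{u}^{H} M \mathbf{u}/\mathbf{u}^{H}\mathbf{u} = \sum_{i}\mu_{i}|a_{i}|^{2}/\sum_{i}|a_{i}|^{2}$, which is a convex combination of the $\mu_{i}$ and therefore lies in $[\mu_{1},\mu_{n}]$, with the extremes attained at $\mathbf{u}=\mathbf{q}_{n}$ and $\mathbf{u}=\mathbf{q}_{1}$ respectively. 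This is the part I would regard as ``routine quotable'' and compress.

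The final step, and the only substantive bookkeeping, is to identify the $\mu_{i}$ with the generalized eigenvalues of $(A,B)$ and pull the optimizers back through the substitution. I would verify that $M\mathbf{u}=\mu\mathbf{u}$ is equivalent to $A\mathbf{e}=\mu B\mathbf{e}$ with $\mathbf{e}\triangleq L^{-H}\mathbf{u}$: starting from $L^{-1}AL^{-H}\mathbf{u}=\mu\mathbf{u}$, multiply on the left by $L$ to get $A L^{-H}\mathbf{u} = \mu L\mathbf{u} = \mu L L^{H} L^{-H}\mathbf{u} = \mu B L^{-H}\mathbf{u}$. Hence the eigenvalues of $M$ are exactly the generalized eigenvalues of the pencil, and the maximizing (resp.\ minimizing) $\mathbf{c}$ of $r(\mathbf{c})$ is $\mathbf{e}_{\max}=L^{-H}\mathbf{q}_{n}$ (resp.\ $\mathbf{e}_{\min}=L^{-H}\mathbf{q}_{1}$). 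Substituting back into $r$ yields $\lambda_{\max}$ and $\lambda_{\min}$, as claimed. The only real obstacle is ensuring the congruence/similarity bookkeeping is correct; positive-definiteness of $B$ makes $L$ invertible and makes this straightforward, so no genuine difficulty arises beyond careful indexing.
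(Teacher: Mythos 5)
Your proof is correct, and it is essentially the standard argument: the paper itself does not prove Theorem~\ref{thm:Q} but defers to \cite[Chapter~6]{Gilbert}, where the result is established by exactly this kind of reduction --- whitening the denominator via a congruence with $B^{1/2}$ (or a Cholesky factor) and then applying the spectral theorem to the resulting Hermitian matrix. The congruence/similarity bookkeeping you carry out, showing $M\mathbf{u}=\mu\mathbf{u}$ is equivalent to $A\ev=\mu B\ev$ with $\ev=L^{-H}\mathbf{u}$, is exactly right and closes the argument.
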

The proof of Theorem~\ref{thm:Q} can be found in \cite[Chapter~6]{Gilbert}. Now
we prove Lemma~\ref{lem:geig} based on the Rayleigh's quotient principle.

\begin{proof}({\bf Lemma~\ref{lem:geig}})
Since both $(I+P\hv\hv^{H})$ and $(I+P\gv\gv^{H})$ are Hermitian and positive
definite matrices, the definition of $\lambda_1$ and Theorem~\ref{thm:Q} imply
that
\begin{align}
\lambda_1= \max_{\cv} \frac{\cv^{H}(I+P\hv\hv^{H}) \cv}{\cv^{H}
(I+P\gv\gv^{H})\cv}.
\end{align}
We consider a unit vector $\cv_0$ that is orthogonal with the vector $\gv$,
i.e., $$\cv_0^{H}\cv_0=1 \quad \text{and} \quad \cv_0^{H}\gv=0.$$ Now, we have
\begin{align}
\lambda_1 &\ge \frac{\cv_0^{H}(I+P\hv\hv^{H}) \cv_0}{\cv_0^{H}
(I+P\gv\gv^{H})\cv_0} \notag\\
&=1+P|\cv_0^{H}\hv|^2 . \label{eq:lg1}
\end{align}
This implies that $\lambda_1\ge1$. Furthermore, when $\hv$ and $\gv$ are linear
independent, there exists a unit vector $\cv_0$ so that
\begin{align}
\cv_0^{H}\gv=0 \quad \text{and}  \quad \cv_0^{H}\hv>0 . \label{eq:lg2}
\end{align}
Substituting (ref{eq:lg2}) into (\ref{eq:lg1}), we obtain
$\lambda_1>1$. By using the same approach, we can show that $\lambda_2\ge 1$,
and, in particular, $\lambda_2> 1$ when $\hv$ and $\gv$ are linear independent.
\end{proof}

\section{Proof of Corollary~\ref{cor:maxr1}} \label{app:sec3}

\begin{proof} 
Theorem~\ref{thm:GBC} demonstrates that for a given $\alpha\in[0,1]$, the
maximum achievable secrecy rate of user~1 is $\gamma_1(\alpha)$. This implies
that
\begin{align}
R_{1,\max} =\max_{0\le \alpha\le1}\gamma_1(\alpha).
\end{align}
We also notice that $\lambda_1=\gamma_1(1)$. Hence, it is sufficient to show
that $\gamma_1(\alpha)$ is a nondecreasing function on an interval $[0,1]$.

Let
\begin{align}
\kappa(\alpha)\triangleq\frac{d \gamma_1(\alpha)}{d \alpha}.
\end{align}
Based on the definition of $\gamma_1(\alpha)$ in (\ref{eq:def-gm-1}), we can
write
\begin{align}
\kappa(\alpha)&=\frac{P|\hv^{H}\ev_1|^2 (1+ \alpha P
|\gv^{H}\ev_1|^2)-(1+\alpha P|\hv^{H}\ev_1|^2)P |\gv^{H}\ev_1|^2}{\left(1+
\alpha P |\gv^{H}\ev_1|^2\right)^2} \notag\\
&=\frac{P(|\hv^{H}\ev_1|^2 - |\gv^{H}\ev_1|^2)}{\left(1+ \alpha P
|\gv^{H}\ev_1|^2\right)^2}
\end{align}
Now, the definitions of $\lambda_1$ and $\ev_1$ in (\ref{eq:eig-def1}) imply
that
\begin{align}
|\hv^{H}\ev_1|^2- \lambda_1|\gv^{H}\ev_1|^2&= \frac{\lambda_1-1}{P}.
\end{align}
Moreover, Since $\lambda_1\ge 1$ (see Lemma~\ref{lem:geig}), we obtain
\begin{align}
\kappa(\alpha)\ge0,
\end{align}
and hence, $\gamma_1(\alpha)$ is a nondecreasing function on an interval
$[0,1]$. Therefore, we have
$$R_{1,\max}=\max_{0\le\alpha\le1} \gamma_1(\alpha)=\lambda_1.$$
\end{proof}

\section{Section~\ref{sec:in} Derivations} \label{app:sec4}

\subsubsection*{\rm ({\bf Double-Binning Scheme})}

By contrast with the classical DPC scheme, the secret DPC scheme is based on
the double-binning code structure as follows. Let
\begin{align}
R^{\star}_1 &=I(\Vv_1;Y_2|\Vv_2), \quad R^{\star}_2=I(\Vv_2;Y_1|\Vv_1) \quad
\text{and} \quad R^{\ddag}=I(\Vv_1;\Vv_2).
\end{align}
Generate $2^{n(R_k+R^{\star}_k+R^{\ddag})}$ codewords $\vv_k^{n}(w_k,j_k,l_k)$,
$w_k=1,2,\dots,2^{R_k}$, $j_k=1,2,\dots,2^{R^{\star}_k}$,
$l_k=1,2,\dots,2^{R^{\ddag}}$, independently at random according to $p(\vv_k)$.
Based on the labeling, we partition the codebook $\{\vv_k^{n}(w_k,j_k,l_k)\}$
into $2^{nR_k}$ {\it bins}, where bin~$w_k$ represents the message index $w_k$.
We further divide bin~$w_k$ into $2^{nR^{\star}_k}$ {\it sub-bins}. Each
sub-bin~$(w_k,j_k)$ contains $2^{nR^{\ddag}}$ codewords.

To send the message pair $(w_1,w_2)$, the transmitter employs a joint
stochastic encoder. We first randomly select a sub-bin~$(w_1,j_1)$ from the
bin~$w_1$ and randomly choose a codeword $\vv_1^{n}(w_1,j_1,l_1)$ from the
sub-bin~$(w_1,j_1)$. Next, we randomly select a sub-bin~$(w_2,j_2)$ from the
bin $w_2$ and find a codeword $\vv_2^{n}(w_2,j_2,l_2)$ in the
sub-bin~$(w_2,j_2)$ so that the sequences $\vv_1^{n}(w_1,j_1,l_1)$ and
$\vv_2^{n}(w_2,j_2,l_2)$ are jointly typical with respective to
$p(\vv_1,\vv_2)$. Since each sub-bin contains $2^{nI(\Vv_1;\Vv_2)}$ codewords,
the encoding is successful with probability close to $1$ as long as $n$ is
large. Finally, we generate the channel input sequence $\xv^{n}(w_1,w_2)$
according to the mapping $p(\xv|\vv_1,\vv_2)$.

\bigskip

\begin{proof}({\bf Lemma~\ref{lem:GBCin}})
We first check the power constraint. Since $\Uv_1$ and $\Uv_2$ are independent
and
$$\Xv=\Uv_1+\Uv_2,$$ the covariance matrices $K_{\Uv_1}$ and $K_{\Uv_2}$
satisfy
\begin{align}
\tr(K_{\Uv_1}+K_{\Uv_2})=\tr(K_{\Xv})\le P.
\end{align}

Following from \cite[Theorem~1]{Yu:IT:04} and using the setting in
(\ref{eq:rvs}), we can immediately obtain the well-known \emph{successive
dirty-paper encoding} result:
\begin{align}
I(\Vv_1;Y_1)-I(\Vv_1;\Vv_2)
&= I(\Uv_1; \hv^{H}\Uv_1+Z_1) \notag\\
&=\log_2 (1+\hv^{H} K_{\Uv_1} \hv). \label{eq:pf1}
\end{align}
Since $\Vv_2=\Uv_2$ is independent of $\Uv_1$ and $\Vv_1=\Uv_1+\bv\hv^{H}
\Uv_2$, we obtain
\begin{align}
I(\Vv_1;Y_2|\Vv_2)&=I(\Uv_1+\bv\hv^{H} \Uv_2; Y_2|\Uv_2)\notag\\
&=I(\Uv_1;Y_2|\Uv_2) \notag\\
&=\log_2(1+\gv^{H} K_{\Uv_1} \gv). \label{eq:pf2}
\end{align}
Combining (\ref{eq:BC-IN-R1}), (\ref{eq:pf1}) and (\ref{eq:pf2}), we have
\begin{align}
R_1 & \le I(\Vv_1;Y_1)-I(\Vv_1;\Vv_2)-I(\Vv_1;Y_2|\Vv_2) \notag\\
&=\log_2 \frac{1+\hv^{H} K_{\Uv_1} \hv}{1+\gv^{H} K_{\Uv_1} \gv}.
\label{eq:pf3}
\end{align}
Moreover, we can compute
\begin{align}
I(\Vv_2;Y_2)&= h(Y_2)-h(Y_2|\Vv_2)\notag\\
&= h(\gv^{H}(\Uv_1+\Uv_2)+Z_2)-h(\gv^{H}\Uv_1+Z_2)\notag\\
&= \log_2 \frac{1+\gv^{H} (K_{\Uv_1}+K_{\Uv_2}) \gv}{1+\gv^{H} K_{\Uv_1} \gv}.
\label{eq:pf4}
\end{align}
and
\begin{align}
I(\Vv_2;Y_1|\Vv_1)+I(\Vv_1;\Vv_2)&=I(\Vv_1,\Vv_2;Y_1)-[I(\Vv_1;Y_1)-I(\Vv_1;\Vv_2)]\notag\\
&=\log_2\frac{1+\hv^{H} (K_{\Uv_1}+K_{\Uv_2}) \hv}{1+\hv^{H} K_{\Uv_1}\hv}.
\label{eq:pf5}
\end{align}
Substituting (\ref{eq:pf4}) and (\ref{eq:pf5}) into (\ref{eq:BC-IN-R2}), we
obtain that
\begin{align}
R_2 & \le I(\Vv_2;Y_2)-I(\Vv_1;\Vv_2)-I(\Vv_2;Y_1|\Vv_1) \notag\\
&=\log_2 \frac{1+\gv^{H} (K_{\Uv_1}+K_{\Uv_2}) \gv}{1+\gv^{H} K_{\Uv_1} \gv}
-\log_2\frac{1+\hv^{H} (K_{\Uv_1}+K_{\Uv_2}) \hv}{1+\hv^{H}
K_{\Uv_1}\hv}\notag\\
&=\log_2 \frac{1+\gv^{H} (K_{\Uv_1}+K_{\Uv_2}) \gv}{1+\hv^{H}
(K_{\Uv_1}+K_{\Uv_2}) \hv} +\log_2\frac{1+\hv^{H} K_{\Uv_1}\hv}{1+\gv^{H}
K_{\Uv_1} \gv}.\label{eq:pf6}
\end{align}
Applying Lemma~\ref{lem:InBC} with bounds (\ref{eq:pf3}) and (\ref{eq:pf6}), we
have the desired result.
\end{proof}

\bigskip

\begin{proof}({\bf Equation (\ref{eq:sdpc-r2})})
For convenience,we define
\begin{align}
d(K_{\Uv_1}, K_{\Uv_2})&\triangleq \frac{[1+\gv^{H} (K_{\Uv_1}+K_{\Uv_2}) \gv]
[1+\hv^{H} K_{\Uv_1}\hv]}{[1+\hv^{H} (K_{\Uv_1}+K_{\Uv_2}) \hv][1+\gv^{H}
K_{\Uv_1} \gv]} \notag\\
&=\left[1+\frac{\gv^{H} K_{\Uv_2} \gv}{1+\gv^{H} K_{\Uv_1} \gv}\right]
\left[1+\frac{\hv^{H}K_{\Uv_2} \hv}{1+\hv^{H} K_{\Uv_1}\hv}\right]^{-1}.
\end{align}
Since $\cv_2^{H}(\alpha)\cv_2(\alpha)=1$, we substitute (\ref{eq:ku12}) into
$d(K_{\Uv_1}, K_{\Uv_2})$ and obtain
\begin{align}
d(K_{\Uv_1}, K_{\Uv_2}) &=\left[1+\frac{(1-\alpha) P  \gv^{H}
\cv_2(\alpha)\cv_2^{H}(\alpha) \gv}{1+\alpha P |\gv^{H}\ev_1|^2}\right]
\left[1+\frac{(1-\alpha) P \hv^{H}\cv_2(\alpha)\cv_2^{H}(\alpha) \hv}{1+ \alpha P|\hv^{H}\ev_1|^2}\right]^{-1}\notag\\
&=\frac{\displaystyle{\cv_2^{H}(\alpha)\cv_2(\alpha)+\frac{(1-\alpha) P
\cv_2^{H}(\alpha)\gv\gv^{H}\cv_2(\alpha)}{1+\alpha P
|\gv^{H}\ev_1|^2}}}{\displaystyle{
\cv_2^{H}(\alpha)\cv_2(\alpha)+\frac{(1-\alpha) P
\cv_2^{H}(\alpha)\hv\hv^{H}\cv_2(\alpha)}{1+ \alpha P
|\hv^{H}\ev_1|^2}}} \notag\\
&=\frac{\cv_2^{H}(\alpha)\left[\displaystyle{I+\frac{(1-\alpha)
P\gv\gv^{H}}{1+\alpha P|\gv^{H}\ev_1|^2}}\right]\cv_2(\alpha)}
{\cv_2^{H}(\alpha)\left[\displaystyle{ I+\frac{(1-\alpha) P \hv\hv^{H}}{1+
\alpha P |\hv^{H}\ev_1|^2}}\right]\cv_2(\alpha)}.
\end{align}
Note that $\gamma_2(\alpha)$ and $\cv_2(\alpha)$ are the largest generalized
eigenvalue and the corresponding normalized eigenvector of the pencil
(\ref{eq:pencil2}), i.e.,
\begin{align*}
\left(I+\frac{(1-\alpha) P}{1+\alpha P |\gv^{H}\ev_1|^2}\gv\gv^{H}\right)
\cv_2(\alpha) = \gamma_2(\alpha) \left( I+\frac{(1-\alpha) P}{1+ \alpha P
|\hv^{H}\ev_1|^2}\hv \hv^{H}\right) \cv_2(\alpha).
\end{align*}
Hence, we have
\begin{align}
d_2(K_{\Uv_1}, K_{\Uv_2}) =\gamma_2(\alpha).
\end{align}
\end{proof}

\section{Section~\ref{sec:out} Derivations} \label{app:sec5}

\begin{proof}({\bf Lemma~\ref{lem:sam}})
It is sufficient to show that the error probability $P_e^{(n)}$ and the
equivocations $H(W_2|Y_1^{n})$ and $H(W_2|Y_1^{n})$ are the same for the
channels $p_{\Yt_1,\Yt_2|\Xv} \in \Pc$ when we use the same codebook and
encoding schemes. We note that
\begin{align}
P_e^{(n)} = \max \bigl\{P_{e,1}^{(n)}, P_{e,2}^{(n)}\bigr\} \le
P_{e,1}^{(n)}+P_{e,2}^{(n)}.
\end{align}
Hence, $P_e^{(n)}$ is small if and only if both $P_{e,1}^{(n)}$ and
$P_{e,2}^{(n)}$ are small. However, for given codebook and encoding scheme
$p(\xv^{n}|w_1,w_2)$, the decoding error probability $P_{e,k}^{(n)}$ and the
equivocation rate at user $k$ depend only on the marginal channel probability
density $p_{\Yt_k|\Xv}$. Therefore, the same code and encoding scheme for any
$p_{\Yt_1,\Yt_2|\Xv} \in \Pc$ gives the same $P_e^{(n)}$ and equivocation
rates. This concludes the proof.
\end{proof}

\bigskip

\begin{proof}({\bf Theorem~\ref{thm:out1}})
Here we prove Theorem~\ref{thm:out1} and derive the outer bound for $R_1$. The
outer bound for $R_2$ follows by symmetry.

The secrecy requirement (\ref{eq:equiv}) implies that
\begin{align}
nR_1= H(W_1) &\le H(W_1|Y_2^{n})+n\epsilon  \label{eq:r1}.
\end{align}
On the other hand, Fano's inequality and $P_e\le \epsilon$ imply that
\begin{align}
H(W_1|Y_1^{n}) &\le \epsilon \log(M_1-1)+h(\epsilon) \triangleq n\delta_1.
\label{eq:sd1}
\end{align}
where $h(x)$ is the binary entropy function. Based on (\ref{eq:r1}) and
(\ref{eq:sd1}), we have
\begin{align}
nR_1& \le H(W_1|Y_2^{n})+n\epsilon \notag\\
&\le H(W_1|Y_2^{n})-H(W_1|Y_1^{n})+n(\delta_1+\epsilon) \notag\\
&\le H(W_1|Y_2^{n})-H(W_1|Y_1^{n},Y_2^{n})+n(\delta_1+\epsilon) \label{eq:cr01}\\
&= I(W_1;Y_1^{n}|Y_2^{n})+n(\delta_1+\epsilon) \label{eq:cr02}
\end{align}
where (\ref{eq:cr01}) follows from conditioning reducing entropy. Since $W_1
\rightarrow \Xv^{n} \rightarrow (Y_1^{n},Y_2^{n})$ forms a Markov chain, we can
further bound (\ref{eq:cr02}) as follows
\begin{align}
nR_1
&\le I(\Xv^{n};Y_1^{n}|Y_2^{n})+n(\delta_1+\epsilon) \notag\\
&\le \sum_{i=1}^{n} I(\Xv_i;Y_{1,i}|Y_{2,i})+n(\delta_1+\epsilon).
\end{align}

Finally, by applying Lemma~\ref{lem:sam}, we can replace $Y_1$ and $Y_2$ by
$\Yt_1$ and $\Yt_2$, respectively. Hence, we have the Sato-type outer bound on
$R_1$.
\end{proof}

\bigskip

\begin{proof} ({\bf Lemma~\ref{lem:outG}})
Here, we proof Lemma~\ref{lem:outG} based on the Sato-type outer bound in
Theorem~\ref{thm:out1}. For the Gaussian BC defined in (\ref{eq:miso-2}), the
upper bound (\ref{eq:BC-out-R1}) on $R_1$ can be rewritten as follows:
\begin{align}
I(\Xv;\Yt_1,\Yt_2)-I(\Xv;\Yt_2)&=h(\Yt_1,\Yt_2)-h(\Yt_1,\Yt_2|\Xv)-h(\Yt_2)+h(\Yt_2|\Xv)\notag\\
&=h(\Yt_1|\Yt_2)-[h(\Zt_1,\Zt_2)-h(\Zt_2)]\notag\\
&=h(\Yt_1|\Yt_2)-\log_2 (2\pi e)(1-|\rho|^2). \label{eq:aor1-1}
\end{align}
The first term of (\ref{eq:aor1-1}) can be further bounded as follows
\begin{align}
h(\Yt_1|\Yt_2)&=h(\Yt_1- \nu \Yt_2 |\Yt_2)\notag\\
&\le h(\Yt_1-\nu \Yt_2) \qquad \text{for any}~\nu \in \mathbb{C}
\label{eq:aor1-2}
\end{align}
where the inequality follows from removing conditioning. Moreover, the
maximum-entropy theorem \cite{Cover} implies that
\begin{align}
h(\Yt_1-\nu \Yt_2) &\le \log_2 (2\pi e)\bigl|\Var\bigl[\Yt_1-\nu
\Yt_2\bigr]\bigr| \notag\\
&=\log_2 (2\pi e) \bigl|\Var\bigl[(\hv-\nu\gv)^{H}\Xv\bigr]+\Var\bigl[\Zt_1-\nu \Zt_2\bigr]\bigr| \notag\\
&=\log_2 (2\pi e) \bigl[(\hv-\nu\gv)^{H} K_{\Xv}
(\hv-\nu\gv)+1+|\nu|^2-\nu^{*}\rho-\rho^{*}\nu \bigr]. \label{eq:upperb}
\end{align}
Combining (\ref{eq:aor1-1}), (\ref{eq:aor1-2}) and (\ref{eq:upperb}), we obtain
the following upper bound:
\begin{align}
I(\Xv;\Yt_1,\Yt_2)-I(\Xv;\Yt_2)&\le \min_{\nu \in \mathbb{C}}  \log_2
\frac{(\hv-\nu\gv)^{H} K_{\Xv}
(\hv-\nu\gv)+1+|\nu|^2-\nu^{*}\rho-\rho^{*}\nu}{(1-|\rho|^2)}\\
&=f_1(\rho, K_{\Xv}). \label{eq:aor1}
\end{align}

Next we prove that for given $\rho$ and $K_{\Xv}$, the expression
$I(\Xv;\Yt_1,\Yt_2)-I(\Xv;\Yt_2)$ is maximized by Gaussian input distributions.
When $\Xv$ is a Gaussian random vector with zero-mean and covariance matrix
$K_{\Xv}$, the channel (\ref{eq:miso-2}) implies that $\Yt_1$ and $\Yt_2$ are
zero-mean Gaussian random variables. Choosing
\begin{align}
\nu=\nu_{\rm o}=\frac{\Cov \bigl[\Yt_1,\Yt_2\bigr]}{\Var\bigl[\Yt_2\bigr]}.
\end{align}
Note that
\begin{align}
\E\bigl[(\Yt_1- \nu_{\rm o} \Yt_2) \Yt_2^{*}\bigr]=\Cov
\bigl[\Yt_1,\Yt_2\bigr]-\nu_{\rm o} \Var\bigl[\Yt_2\bigr]=0,
\end{align}
the Gaussian random variables $\Yt_1- \nu_{\rm o} \Yt_2$ and $\Yt_2$ are
uncorrelated, and hence they are statistically independent. This implies that
\begin{align}
h(\Yt_1|\Yt_2)&= h(\Yt_1-\nu_{\rm o} \Yt_2) \notag\\
&=\log_2 (2\pi e) \bigl[(\hv-\nu_{\rm o}\gv)^{H} K_{\Xv} (\hv-\nu_{\rm
o}\gv)+1+|\nu_{\rm o}|^2-\nu_{\rm o}^{*}\rho-\rho^{*}\nu_{\rm o} \bigr].
\label{eq:aor1-3}
\end{align}
Inserting (\ref{eq:aor1-3}) into (\ref{eq:aor1-1}) we have
\begin{align}
I(\Xv;\Yt_1,\Yt_2)-I(\Xv;\Yt_2)&=\log_2 \frac{(\hv-\nu_{\rm o}\gv)^{H} K_{\Xv}
(\hv-\nu_{\rm o}\gv)+1+|\nu_{\rm o}|^2-\nu_{\rm o}^{*}\rho-\rho^{*}\nu_{\rm o}}{(1-|\rho|^2)} \notag\\
& \ge f_1(\rho, K_{\Xv}). \label{eq:aor1-4}
\end{align}
Bounds (\ref{eq:aor1}) and (\ref{eq:aor1-4}) imply that Gaussian input
distributions are optimal for the expression $I(\Xv;\Yt_1,\Yt_2)-I(\Xv;\Yt_2)$.

Following the same approach, we can prove that for given $\rho$ and $K_{\Xv}$,
Gaussian input distributions maximize the expression
$I(\Xv;\Yt_1,\Yt_2)-I(\Xv;\Yt_1)$, the upper bound (\ref{eq:BC-out-R2}) on
$R_2$. This lets us restrict attention to zero-mean Gaussian $\Xv$ with
covariance matrix $K_{\Xv}$. Now, bounds (\ref{eq:BC-out-R1}) and
(\ref{eq:BC-out-R2}) become
\begin{align}
&           & R_1 &\le f_1(\rho, K_{\Xv}) \label{eq:aor1-5} &\\
&\text{and} & R_2 &\le  f_2(\rho, K_{\Xv}) \label{eq:aor2}
\end{align}
This yields the rate region ${\Rc}_{\rm O}^{\rm MG}(\rho, K_{\Xv})$. Hence we
have the desired result.
\end{proof}

\bigskip

\begin{proof} ({\bf Lemma~\ref{lem:Ka}})
For a given $K_{\Xv}$, we first evaluate $L(K_{\Xv},0)$ and $L(K_{\Xv},1)$.
Since $\gamma_1(0)=1$ and the input covariance matrix $K_{\Xv}$ is positive
semidefinite, we obtain
\begin{align}
L(K_{\Xv},0)=(\hv-\rho_{\rm o}\gv)^{H}K_{\Xv}(\hv-\rho_{\rm o}\gv) \ge 0.
\end{align}

On the other hand, since $\gamma_1(1)=\lambda_1$ and $K_{\Xv}$ is Hermitian and
positive semidefinite, we have
\begin{align}
L(K_{\Xv},1)&=(\hv-\rho_{\rm o}\lambda_1\gv)^{H}(K_{\Xv}-P
\ev_1\ev_1^{H})(\hv-\rho_{\rm o}\lambda_1\gv) \notag\\
&\le \tr(K_{\Xv})|\hv-\rho_{\rm o}\lambda_1\gv|^2 -P|(\hv-\rho_{\rm
o}\lambda_1\gv)^{H}\ev_1|^2\notag\\
&\le P|\hv-\rho_{\rm o}\lambda_1\gv|^2 -P|(\hv-\rho_{\rm
o}\lambda_1\gv)^{H}\ev_1|^2. \label{eq:exist0}
\end{align}
Based on the definition of $\rho_{\rm o}$ in (\ref{eq:def-rhoo}), we can
compute
\begin{align}
\hv- \rho_{\rm o} \lambda_1 \gv &=\hv- \lambda_1\frac{
\gv\gv^{H}\ev_1}{\hv^{H}\ev_1}\notag\\
&=\frac{(\hv\hv^{H}- \lambda_1
\gv\gv^{H})\ev_1}{\hv^{H}\ev_1} \notag\\
&=\frac{(\lambda_1-1)\ev_1}{P\hv^{H}\ev_1}\label{eq:exist1}
\end{align}
where the last step follows from the definitions $\lambda_1$ and $\ev_1$ in
(\ref{eq:eig-def1}). Moreover, since $\ev_1^{H}\ev_1=1$, (\ref{eq:exist1}) can
be rewritten as
\begin{align}
L(K_{\Xv},1)&\le P\left|\frac{(\lambda_1-1)\ev_1}{P\hv^{H}\ev_1}\right|^2
-P\left|\frac{\lambda_1-1}{P\hv^{H}\ev_1}\ev_1^{H}\ev_1\right|^2\notag\\
&=0
\end{align}

We note that $L(K_{\Xv},\alpha)$ is a continuous function on the interval
$\alpha\in[0,1]$ for a give $K_{\Xv}$. Since $L(K_{\Xv},0) \ge 0$ and
$L(K_{\Xv},1) \le 0$, there exists $\alpha\in [0,1]$ such that
$L(K_{\Xv},\alpha)=0$.
\end{proof}

\bigskip

\begin{proof} ({\bf Lemma~\ref{lem:small}})
We note that $\lambda_1$ and $\ev_1$ are the largest generalized eigenvalue and
the corresponding normalized eigenvector of the pencil $[I+P\hv\hv^{H},
I+P\gv\gv^{H}]$. Based on the Rayleigh's quotient principle in
Theorem~\ref{thm:Q}, we obtain
\begin{align}
\max_{\cv}\frac{\cv^{H}(I+P\hv\hv^{H})\cv} {\cv^{H}(I+P\gv\gv^{H})\cv}
&=\frac{\ev_1^{H}(I+P\hv\hv^{H})\ev_1}
{\ev_1^{H}(I+P\gv\gv^{H})\ev_1}=\lambda_1.
\end{align}
Hence, we have
\begin{align}
\min_{\cv}\frac{(1+ \alpha P
|\gv^{H}\ev_1|^2)+\cv^{H}[(1-\alpha)P\gv\gv^{H}]\cv}{(1+ \alpha P
|\hv^{H}\ev_1|^2)+\cv^{H}[(1-\alpha)P\hv\hv^{H}]\cv} &=\frac
{\ev_1^{H}(I+P\gv\gv^{H})\ev_1} {\ev_1^{H}(I+P\hv\hv^{H})\ev_1}
=\frac{1}{\lambda_1}.
\end{align}
By using the definition of $\gamma_1(\alpha)$ in (\ref{eq:def-gm-1}), we have
\begin{align}
\min_{\cv} \frac{\displaystyle \cv^{H} \left[I+\frac{(1-\alpha) P}{1+\alpha P
|\gv^{H}\ev_1|^2}\gv\gv^{H}\right]\cv}{\displaystyle
\cv^{H}\left[I+\frac{(1-\alpha) P}{1+\alpha P
|\hv^{H}\ev_1|^2}\hv\hv^{H}\right] \cv} & =\frac{\displaystyle \ev_1^{H}
\left[I+\frac{(1-\alpha) P}{1+\alpha P
|\gv^{H}\ev_1|^2}\gv\gv^{H}\right]\ev_1}{\displaystyle
\ev_1^{H}\left[I+\frac{(1-\alpha) P}{1+\alpha P
|\hv^{H}\ev_1|^2}\hv\hv^{H}\right] \ev_1} 
=\frac{\gamma_1(\alpha)}{\lambda_1}.
\end{align}
Now, the Rayleigh's quotient principle implies that
$\gamma_1(\alpha)/\lambda_1$ and $\ev_1(\alpha)$ are the smallest generalized
eigenvalue and the corresponding normalized eigenvector of the pencil
(\ref{eq:pencil2}).
\end{proof}

\bigskip

\begin{proof} ({\bf Lemma~\ref{lem:orth}})
We first show that $[\gv-\rho_{\rm o}^{*} \gamma_2(\alpha)\hv] \propto
\cv_2(\alpha).$ Since $\gamma_1(\alpha)$ is real and
\begin{align}
\rho_{\rm o}=\frac{1}{\gamma_1(\alpha)} \left[\frac{\hv^{H}\cv_2(\alpha)}
{\gv^{H}\cv_2(\alpha)}\right]^{*}, \label{eq:rhostar}
\end{align}
we have
\begin{align}
\gv-\rho_{\rm o}^{*} \gamma_2(\alpha)\hv&=\gv-
\frac{\gamma_2(\alpha)}{\gamma_1(\alpha)} \frac{\hv\hv^{H}\cv_2(\alpha)} {\gv^{H}\cv_2(\alpha)} \notag\\
&=\frac{[\gamma_1(\alpha)\gv\gv^{H}- \gamma_2(\alpha)
\hv\hv^{H}]\cv_2(\alpha)}{\gamma_1(\alpha)\gv^{H}\cv_2(\alpha)}.
\label{eq:gh-c2}
\end{align}
Note that (\ref{eq:g2c2}) implies that
\begin{align}
\left[\frac{(1-\alpha) P}{1+\alpha P
|\gv^{H}\ev_1|^2}\gv\gv^{H}-\gamma_2(\alpha) \frac{(1-\alpha) P}{1+ \alpha P
|\hv^{H}\ev_1|^2}\hv \hv^{H}\right] \cv_2(\alpha) =
[\gamma_2(\alpha)-1]\cv_2(\alpha).
\end{align}
Based on the definition of $\gamma_1(\alpha)$ in (\ref{eq:def-gm-1}), we obtain
\begin{align}
[\gamma_1(\alpha)\gv\gv^{H}- \gamma_2(\alpha)\hv \hv^{H}] \cv_2(\alpha)=
\frac{1+ \alpha P
|\hv^{H}\ev_1|^2}{(1-\alpha)P}[\gamma_2(\alpha)-1]\cv_2(\alpha).
\end{align}
Now, we can rewritten (\ref{eq:gh-c2}) as
\begin{align}
\gv-\rho_{\rm o}^{*} \gamma_2(\alpha)\hv&=\frac{(1+ \alpha P
|\hv^{H}\ev_1|^2)[\gamma_2(\alpha)-1]}{ \gamma_1(\alpha)\gv^{H}\cv_2(\alpha)
(1-\alpha)P } \cv_2(\alpha). \label{eq:gh-c22}
\end{align}
Hence, we obtain
\begin{align}
\frac{\gv-\rho_{\rm o}^{*} \gamma_2(\alpha)\hv}{|\gv-\rho_{\rm o}^{*}
\gamma_2(\alpha)\hv|^2}=\cv_2(\alpha). \label{eq:c2pro}
\end{align}

Next we prove that $[\hv- \rho_{\rm o}\gamma_1(\alpha)\gv]^{H}\cv_2(\alpha)=0.$
The definitions of $\lambda_1$ and $\ev_1$ in (\ref{eq:eig-def1}) implies that
\begin{align}
\hv\hv^{H}\ev_1= \left[\frac{\lambda_1-1}{P} I+\lambda_1\gv\gv^{H}\right]\ev_1.
\label{eq:pvot}
\end{align}
Substituting (\ref{eq:pvot}) into (\ref{eq:apx-m13}), we obtain
\begin{align}
\hv- \rho_{\rm o}
\gamma_1(\alpha)\gv&=\frac{1}{\hv^{H}\ev_1}\left[\frac{\lambda_1-1}{P}
I+\lambda_1\gv\gv^{H}-
\gamma_1(\alpha) \gv\gv^{H}\right]\ev_1 \notag\\
&=\frac{\lambda_1-1}{P\hv^{H}\ev_1}\left[I+\frac{\lambda_1-
\gamma_1(\alpha)}{\lambda_1-1} P \gv\gv^{H}\right]\ev_1. \label{eq:pvot1}
\end{align}
Based on the definition of $\gamma_1(\alpha)$ in (\ref{eq:def-gm-1}), we obtain
\begin{align}
\lambda_1- \gamma_1(\alpha)&=\lambda_1-\frac{1+\alpha P|\hv^{H}\ev_1|^2}{1+
\alpha P |\gv^{H}\ev_1|^2} \notag\\
&=\frac{(\lambda_1-1) +\alpha P(\lambda_1 |\gv^{H}\ev_1|^2 -|\hv^{H}\ev_1|^2)}{1+ \alpha P |\gv^{H}\ev_1|^2}\notag\\
&=(\lambda_1-1)\frac{1-\alpha}{1+ \alpha P |\gv^{H}\ev_1|^2} \label{eq:pvot2}
\end{align}
where the last step follows from (\ref{eq:hhgg}). Substituting (\ref{eq:pvot1})
into (\ref{eq:pvot2}), we have
\begin{align}
\hv- \rho_{\rm o} \gamma_1(\alpha)\gv
&=\frac{\lambda_1-1}{P\hv^{H}\ev_1}\left[I+\frac{(1-\alpha)P}{1+ \alpha P
|\gv^{H}\ev_1|^2} \gv\gv^{H}\right]\ev_1. \label{eq:pvot3}
\end{align}
Now (\ref{eq:e1ooc2}) and (\ref{eq:pvot3}) imply that
\begin{align}
[\hv- \rho_{\rm o} \gamma_1(\alpha)\gv]^{H}\cv_2(\alpha)=0.
\end{align}
Combining with (\ref{eq:c2pro}), we have the desired result.
\end{proof}

\bigskip

\begin{proof} ({\bf Equation (\ref{eq:fm6})})
First, we consider $\zeta(\alpha)$ defined in (\ref{eq:zta}). Based on
(\ref{eq:gh-c2}) and (\ref{eq:gh-c22}), $\zeta(\alpha)$ can be rewritten as
\begin{align}
\zeta(\alpha)&=[\gv-\rho_{\rm o}^{*} \gamma_2(\alpha)\hv]^{H}[\gv-\rho_{\rm
o}^{*} \gamma_2(\alpha)\hv] \notag\\
&=\frac{(1+ \alpha P |\hv^{H}\ev_1|^2)[\gamma_2(\alpha)-1]}{(1-\alpha)P}
\left\{\frac{\cv_2^{H}(\alpha) [\gamma_1(\alpha)\gv\gv^{H}-
\gamma_2(\alpha)\hv\hv^{H}]\cv_2(\alpha)}
{|\gamma_1(\alpha)\gv^{H}\cv_2(\alpha)|^2}\right\}  \notag\\
&=\left[\frac{1+ \alpha P |\hv^{H}\ev_1|^2}{(1-\alpha)P}\right]
[\gamma_2(\alpha)-1] \left[\frac{1}{\gamma_1(\alpha)}-
\gamma_2(\alpha)|\rho_{\rm o}|^2\right].\label{eq:fm2}
\end{align}
Now, we consider
\begin{align}
(1-\alpha)P \zeta(\alpha)&=[\gamma_2(\alpha)-1] [1+\alpha P |\gv^{H}\ev_1|^2-
\gamma_2(\alpha)|\rho_{\rm o}|^2- \alpha P |\hv^{H}\ev_1|^2
\gamma_2(\alpha)|\rho_{\rm o}|^2]\notag\\
&=[\gamma_2(\alpha)-1] \{1- \gamma_2(\alpha)|\rho_{\rm o}|^2+\alpha P
|\gv^{H}\ev_1|^2 [1- \gamma_2(\alpha)]\} \label{eq:fm5}
\end{align}
where the last step of (\ref{eq:fm5}) follows from $\rho_{\rm
o}=(\gv^{H}\ev_1)/(\hv^{H}\ev_1)$.


Next, we consider $\eta(\alpha)$ defined in (\ref{eq:eta}). Note that
(\ref{eq:e1c2}) implies that
\begin{align}
|\ev_1^{H}\cv_2(\alpha)|^2 &= \left[\frac{(1-\alpha) P}{1+\alpha P
|\gv^{H}\ev_1|^2}\right]^2 |\gv^{H}\ev_1|^2|\gv^{H}\cv_2(\alpha)|^2.
\label{eq:2evcv}
\end{align}
Combining  (\ref{eq:rhostar}), (\ref{eq:fm2}) and (\ref{eq:2evcv}), we obtain
\begin{align}
\eta(\alpha)&= \zeta(\alpha)|\ev_1^{H}\cv_2(\alpha)|^2 \notag\\
&= \frac{(1+ \alpha P |\hv^{H}\ev_1|^2)(\gamma_2(\alpha)-1)}{(1-\alpha)P}
\left[\frac{1}{\gamma_1(\alpha)}-
\frac{\gamma_2(\alpha)}{\gamma_1^2(\alpha)}\left|\frac{\hv^{H}\cv_2}{\gv^{H}\cv_2}\right|^2
\right]
|\ev_1^{H}\cv_2(\alpha)|^2 \notag\\
&=[\gamma_2(\alpha)-1]|\gv^{H}\ev_1|^2
\left[\frac{(1-\alpha)P|\gv^{H}\cv_2(\alpha)|^2}{1+ \alpha P |\gv^{H}\ev_1|^2}-
\gamma_2(\alpha)\frac{(1-\alpha)P|\hv^{H}\cv_2(\alpha)|^2}{1+ \alpha P
|\hv^{H}\ev_1|^2}\right] \notag\\
&=[\gamma_2(\alpha)-1]^2|\gv^{H}\ev_1|^2 \label{eq:fm4}
\end{align}
where the last step of (\ref{eq:fm4}) follows from the definition of
$\gamma_2(\alpha)$ in (\ref{eq:g2c2}). Combining (\ref{eq:fm5}) and
(\ref{eq:fm4}), we obtain the desired result.
\end{proof}

\bibliographystyle{IEEEtran}
\bibliography{secrecy}

\end{document}